\newcommand{\notransversal}{\mathbin{\hbox{$\cap$ \kern-.45cm
\raise.3ex\hbox{$\top$}\kern-.26cm \raise.3ex\hbox{$/$}}}}
\newtheorem{theorem}{Theorem}
\newtheorem{corollary}[theorem]{Corollary}
\newtheorem{lemma}[theorem]{Lemma}
\newtheorem{proposition}[theorem]{Proposition}
\newtheorem{example}[theorem]{Example}
\newtheorem{definition}[theorem]{Definition}
\newtheorem{remark}[theorem]{Remark}
\begin{document}
\thispagestyle{empty}

\pagenumbering{roman} 

\pagestyle{headings}

\title{Abelian Noncyclic Orbit Codes and Multishot Subspace Codes}
\author{Gustavo Terra Bastos, Reginaldo Palazzo J\'unior, and Marin\^es Guerreiro \thanks{G.T. Bastos
is with the Department of Mathematics and Statistics, UFSJ, Brazil. R.
Palazzo Jr. is with the Department of Communications, FEEC, Universidade
Estadual de Campinas, UNICAMP, Brazil. M. Guerreiro
is with the Department of Mathematics, UFV, Brazil. This work has been
supported by FAPESP under grant 2013/25977-7, CNPq under grant
305656/2015-5. E-mails: gtbastos@ufsj.edu.br; palazzo@dt.fee.unicamp.br; marines@ufv.br}}

\maketitle

\begin{abstract}
In this paper we characterize the orbit codes as geometrically uniform codes. This characterization is based on the description of all isometries over a projective geometry. In addition, the Abelian orbit codes are defined and a new construction of Abelian non-cyclic orbit codes is presented. In order to analyze their structures, the concept of geometrically uniform partitions have to be reinterpreted. As a consequence, a substantial reduction in the number of computations needed to obtain the minimum subspace distance of these codes is achieved and established. 

An application of orbit codes to multishot subspace codes obtained according to a multi-level construction is provided.
\end{abstract}

Geometrically uniform codes, \and Abelian orbit codes, \and Multishot subspace codes, \and Geometrically uniform partitions.

\section{Introduction}

In a communication network system, the overall throughput of the network can be increased with the introduction of the concept of Network Coding~\cite{ahl}. In order to correct possible errors or erasures during a transmission, one of the proposed coding techniques to be employed is the class of \textit{Subspace Codes}~\cite{koetterk}. The strategy behind these codes may be described as follows: an information packet sent by the source, processed in the intermediate nodes, and received at the destinations, can be seen as a ``codeword" or as a ``point", or equivalently, as a vector subspace in a projective space.

Let $\mathcal{P}_q (n)$ be a projective space. Thus, $\mathcal{P}_q (n)$ denotes the set of all vector subspaces of a vector space $\mathbb{F}_{q}^n$ or $\mathbb{F}_{q^n}$, with $\mathbb{F}_q$ a finite field with $q$ elements, for $q$ a prime number or a power of a prime. This set is partitioned into subsets, each one called a \textit{Grassmannian} and denoted by $\mathcal{G}_q (n,k)$. They are defined as the collection of all $k$-dimensional subspaces of  $\mathbb{F}_{q}^n$, with cardinality $\left|\mathcal{G}_q (n,k)\right|=\left[
\begin{array}{c}
 n \\
 k \\
 \end{array}
\right]_{q}$, the Gaussian binomial coefficient. An $(n,\textbf{M},d)$-\emph{subspace code} $C$ is a collection of $\textbf{M}$ vector subspaces from $\mathbb{F}_q ^n$ with minimum distance $d$. In particular, if every codeword in $C$ has the same dimension $k$ $(0\leq k\leq n)$, then $C$ is an $(n,\textbf{M},d,k)$-constant dimension subspace code, or simply a \emph{constant dimension code}. The minimum distance $d=d_S (C)$ is computed using a metric called \emph{subspace distance}. Given two distinct codewords $V,W \in C$, the distance between them is $d_S (V,W)=$ dim $V$ + dim$W$ - 2dim$(V \cap W )$. When $C$ is a constant dimension code, $d$ will be an even number. We refer the reader to~\cite{kha} for more detailed information regarding subspace codes employed in the Network Coding context.

Denote by $GL_n \left(\mathbb{F}_q \right)$ the general linear group of $n\times n$ invertible matrices over a
finite field  $\mathbb{F}_q$.
In~\cite{orbitcodesnew} it is proposed a new way to describe constant dimension codes. They are called \emph{orbit codes} when it is considered sn action of a subgroup $G$ of $GL_n \left(\mathbb{F}_q \right)$
 on a $k$-dimensional subspace $V$. If $G$ is a cyclic group, the subspace code is called a \emph{cyclic orbit code}.

A second approach describing cyclic orbit codes is presented in~\cite{cyclic}. The difference between these two approaches is that, in the first one, $\mathcal{P}_q (n)$ is considered as a collection of subspaces from $\mathbb{F}_q ^n$, whereas in the second $\mathcal{P}_q (n)$ is considered as a collection of subspaces from $\mathbb{F}_{q^n}$. Actually, these two approaches are the same due to the vector-space isomorphism between $\mathbb{F}_q ^n$ and $\mathbb{F}_{q^n}$. However, each approach has its advantages regarding the construction of cyclic orbit codes.

Recently, new constructions of orbit codes were proposed. For instance, in~\cite{castlemeeting} it is proposed a construction of orbit codes using as generating group a more general Abelian group, whose codes attain the maximum subspace distance. In~\cite{iran}, a well-structured construction of non-Abelian orbit codes is proposed making use of the semi-direct product of the cyclic group generated by a primitive element of $\mathbb{F}_{q^n}$ with the cyclic group generated by the Frobenius automorphism. Due to the transitive action of the corresponding algebraic structure, these orbit codes can be classified as geometrically uniform subspace codes. On the other hand, in~\cite{uniaoantigo} and more recently in~\cite{china}, constructions of constant dimension codes based on the union of cyclic orbit codes with a prescribed minimum distance are provided. In this case, it is not possible to state that these constant dimension subspace codes are orbit codes and, consequently, geometrically uniform subspace codes.

The previous constructions motivates us to consider the important aspect that the class of geometrically uniform subspace codes is the proper class for the identification of orbit codes. The classical geometrically uniform codes, as proposed by Forney~\cite{forney}, can be seen as classical orbit codes since there is a group action on a codeword (vector). By making use of this concept to the subspace code context together with the classification of isometries in projective geometry as shown in~\cite{isometriasana}, we characterize all geometrically uniform subspace codes as orbit codes. Since orbit codes may be seen as geometrically uniform subspace codes it follows that the classical definitions and results about geometrically uniform codes may be generalized to the case in consideration. In particular, using the geometrically uniform partitions, Theorem~\ref{theoremprincioal} is established showing that the number of computations necessary to obtain the minimum subspace distance of Abelian orbit codes may be reduced substantially. As a consequence, its corollary provides the exact number of computations necessary to obtain the minimum subspace distance of cyclic orbit codes. Finally, an application of the orbit codes to a construction of multishot subspace codes~\cite{nobrega} is provided, where it is possible to note some advantages in their implementation, which comes from the geometrically uniform properties.

It is worth mentioning that the identification of orbit codes as geometrically uniform subspace codes does not imply in a new subspace code construction, it just provides a new way of viewing this class of codes so that one may fully explore its inherent geometric and algebraic properties.

This paper is organized as follow: In Section \ref{section2}, the definitions of orbit codes and, in particular, cyclic orbit codes are presented. We also propose a new construction of Abelian non-cyclic orbit codes such that, for fixed $n$, $q$, $d$ and $q-1 \geq n$, it meets the best cardinality between the orbit codes known so far. In Section \ref{codsubgeomunif}, we review some of the basic concepts of geometrically uniform codes~\cite{forney}, where the corresponding results will be adapted to the context of subspace codes.
From the characterization of geometrically uniform subspace codes as orbit codes and from some definitions in~\cite{biglieri}, we provide a procedure to reduce the number of computations of the minimum subspace distance by focusing on the Abelian orbit codes.
 In Section~\ref{multishot}, given $S \subseteq \mathcal{G}_q (n,k)$ an alphabet for multishot codes construction, we propose to use group action given by $G< GL_n \left(\mathbb{F}_q\right)$ over $S$ in order to partition it as a collection of orbit codes. Applying the subgroups of $G$ over these orbit codes and their respective orbit subcodes, we provide a systematic way of partitioning $S$ and a considerable reduction of the number of computations needed to obtain the intrasubset subspace distance in each level of the partition. Finally, in Section \ref{section4} the conclusions are drawn.

\section{Orbit Codes in $\mathcal{G}_q (n,k)$}\label{section2}

Trautmann, Manganiello and Rosenthal~\cite{orbitcodesnew} have introduced the concept of orbit codes in the network coding context. This class of codes is generated by a subgroup $G$ of 
$GL_n \left(\mathbb{F}_q \right)$ acting on a $k$-dimensional vector subspace of the vector space $\mathbb{F}_q ^n$. If $G$ is an Abelian group, then the code is said to be an \emph{Abelian orbit code}. In particular, if we take a cyclic group of $GL_n \left(\mathbb{F}_q \right)$, then this code is said to be a \emph{cyclic orbit code}. Considering the latter case, an alternative definition is the one based on the vector subspaces of a finite field $\mathbb{F}_{q^n}$, with an specific subspace taken as the initial ``point" and the action of the cyclic group on such a ``point", which results in a cyclic orbit code.

\begin{definition}~\cite{coc}\label{defdecodcicana}
Let $G$ be a subgroup of $GL_n \left(\mathbb{F}_q \right)$. Then $C_G (V)=\left\{rs(\mathcal{VA}): \mathcal{A} \in G \right\}$ is called an \emph{orbit code}, with $V$ a $k$-dimensional vector space in $\mathcal{G}_q (n,k)$, and $rs(\mathcal{A})$ denotes the row space generated by the matrix $\mathcal{A}$. In particular, for $G=\langle \mathcal{A}\rangle$ a cyclic group, $C_{\langle \mathcal{A}\rangle} (V)$ is said to be a \emph{cyclic orbit code}.
\end{definition}

\begin{definition}\cite{coc}\label{definitiondeirredutivel}
A matrix $\mathcal{A}\in GL_n \left(\mathbb{F}_q \right)$ is irreducible if $\mathbb{F}_q ^n$ contains no nontrivial $\mathcal{A}$-invariant subspace, otherwise it is reducible. A subspace $V\subseteq \mathbb{F}_q ^n$ is $\mathcal{A}$-invariant if $rs(\mathcal{VA})=V$. A non-trivial subgroup $G\leq GL_n \left(\mathbb{F}_q \right)$ is irreducible if $\mathbb{F}_q ^n$ contains no non-trivial $G$-invariant subspace, otherwise it is reducible.
\end{definition}

Let $\alpha \in \mathbb{F}_{q^n}$ be a root of an irreducible polynomial $p(x)\in \mathbb{F}_q [x]$, with degree $n$ a positive integer. If $\mathbb{F}_{q^n}$ is seen as an $\mathbb{F}_q$-vector space, then  the following isomorphisms of vector spaces hold

\begin{equation}\label{representacoesdefqn}
\mathbb{F}_{q^n} \simeq \mathbb{F}_q [x]/\langle p(x)\rangle \simeq \mathbb{F}_q [\alpha]\simeq \mathbb{F}_q ^n .
\end{equation}
In other words, the vector space $\mathbb{F}_{q}^{n}$ may be realized at least in these three distinct ways.
In particular, if $p(x)$ is a primitive polynomial, then $\alpha$ is a primitive element of $\mathbb{F}_{q^n}$ and a $k$-dimensional vector subspace $V$ of $\mathbb{F}_{q^n}$, for $1\leq k \leq n$, is denoted as follows

\begin{equation}\label{descricaoespvet}
V=\left\{0,\alpha^{i_1} , \alpha^{i_2} ,..., \alpha^{i_{q^k -1}}\right\}.
\end{equation}

\begin{remark}
From now on, $\alpha$ will denote a primitive element (a root of $p(x)$) in \linebreak$\displaystyle{\mathbb{F}_{q^{n}}\simeq \mathbb{F}_q [x]/\langle p(x)\rangle}$ and the isomorphisms shown in \eqref{representacoesdefqn} will be used freely.
\end{remark}

Hence we may define cyclic orbit codes also as follows.

\begin{definition}\label{troya}\cite{cyclic}
Fix an element $\beta= \alpha^j$ of $\mathbb{F}_{q^n} ^{*} \setminus \{1\}$. Let $V$ be a subspace of the vector space $\mathbb{F}_{q^n}$. The $\beta$-\emph{cyclic orbit code} generated by $V$ is defined as the set
\begin{equation}
C_{\langle \beta\rangle} (V):=\left\{V \beta^i : i=0,1,..., \mbox{ord}(\beta)-1\right\}.
\end{equation}

If $\beta=\alpha$ or $\beta$ is equal to any other primitive element of $\mathbb{F}_{q^n}$, then the $\beta$-cyclic orbit code is denoted by $C_{\langle\alpha \rangle} (V)$ and it is called a cyclic orbit code.
\end{definition}

\begin{example}
Let $p(x)=x^6 + x+1$ be a primitive polynomial in $\mathbb{F}_2 [x]$ and $\alpha \in \mathbb{F}_{2^6}\simeq \mathbb{F}_2 [x] / \langle p(x)\rangle$ a root of $p(x)$. Given $V=\left\{0, \alpha, \alpha^8 , \alpha^{12}, \alpha^{26}, \alpha^{27}, \alpha^{32}, \alpha^{35}\right\}$ a $3$-dimensional vector subspace of $\mathbb{F}_{2^6}$, then the cyclic orbit code $C_{\langle \alpha \rangle} (V)$ is a $(6,63,4,3)$-constant dimension code.
\end{example}

\begin{definition}\label{estabilizador}
Let $C_G(V)$ be an orbit code. The \emph{stabilizer} of a $k$-dimensional vector subspace $V$ of $\mathbb{F}_q ^n$ is the subgroup $Stab_G (V):=\left\{\mathcal{A} \in G : rs(\mathcal{VA})=V \right\}$ of $G$. If $V$ is seen as a $k$-dimensional vector subspace of $\mathbb{F}_{q^n}$ and $\langle \beta \rangle = G$, for $\beta=\alpha^j $, then, by abuse of notation, $Stab_{G} (V):=\left\{\beta^i \in G : V \beta^i = V \right\}$.
\end{definition}

\begin{definition}\label{spread}
Given $r$ a positive integer such that $r|n$ and an $r$-dimensional vector subspace $V=\mathbb{F}_{q^r}$ of $\mathbb{F}_{q^n}$, 
the orbit code $C_{\langle \alpha \rangle} (V)$ is called an \emph{spread code}.
\end{definition}

Note that an spread code is a $\left(n, \frac{q^n -1} {q^r -1},2r,r\right)$-constant dimension code and it is an example of an optimal subspace code. For more information about spread codes, we refer the reader to~\cite{spreadcodes}.
%
%

In~\cite{cyclic} and \cite{coc}, it is observed that the minimum distance of an orbit code $C_{G} (V)$ can be obtained by listing all the subspace distances between a given codeword $V$ and the remaining ones, since
\begin{eqnarray}
d_S \left(rs(\mathcal{VA}), rs(\mathcal{VB}) \right)&=&d_S \left( V, rs\left(\mathcal{VBA}^{-1}\right) \right)\mbox{ or }\nonumber\\
d_S \left(V\alpha^i, V\alpha^j) \right)&=&d_S \left( V, V\alpha^{j-i}\right),
\end{eqnarray}
for any $\mathcal{A},\mathcal{B} \in G$. This is due to the fact that the elements of $GL_n \left(\mathbb{F}_q \right)$ (and powers of $\alpha$) act as isometries on $\mathcal{P}_q (n)$, see Section~\ref{codsubgeomunif}. Thus, the minimum (subspace) distance of this class of codes is given by
\begin{equation}
d=d_S \left(C_G (V)\right)=\min_{g \in G \setminus Stab_G (V)} \left\{d_S (V, Vg)\right\}.
\end{equation}

To the best of our knowledge, the first Abelian non-cyclic orbit codes with parameters $(n,q(q-1),2k,k)$,~\cite[Theorem 3]{castlemeeting}, were proposed in~\cite{castlemeeting}, satisfying the inequalities

\begin{equation}\label{restricaocoddoclement}
p^{r-1}\leq n-2k<k<n-k\leq p^r=q,
\end{equation}
where $r\geq 1$, $p$ prime, $n$ and $k$ positive integers.

\subsection{A New Construction of Abelian non-Cyclic Orbit Codes}

Given $A=\left[a_{ij} \right] \in GL_n (q)$, let $UT_n (q) < GL_n (q)$ be the non-Abelian group of the upper triangular matrices, that is,
\begin{equation}
UT_n (q):=\left\{ A \in GL_n (q) : a_{ij}=0 \mbox{ for } i<j \right\}.
\end{equation}

It is known that $\left|UT_n (q)\right|=q^M$, for $\displaystyle{M=\sum_{i=1} ^{n-1} i}$,
and
\begin{equation}
\left|GL_n (q)\right|=\left|UT_n (q) \right| \cdot m, \mbox{ with } m=\prod_{j=1} ^n q^j -1.
\end{equation}

Therefore, $UT_n (q) $ is a $p$-Sylow subgroup of $GL_n(q)$~\cite{isaacs}.

%

The largest Abelian subgroup of $UT_n  (q)$, for $q$ odd, is described by the following theorem.

\begin{theorem}\cite{abeliano}\label{maxorder}
Let $\mathbb{F}_q$ be a finite field of order $q=p^t$ ($p$ an odd prime). The maximal order of an Abelian $p$-subgroup of $GL_n (q)$ is $q^{\left\lfloor \frac{n^2}{4}\right\rfloor}$ and this maximum is attained.
\end{theorem}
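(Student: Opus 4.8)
The plan is to split the statement into an upper bound, that no Abelian $p$-subgroup exceeds order $q^{\lfloor n^2/4\rfloor}$, and an attainment, that some Abelian $p$-subgroup realizes it, and to convert the upper bound into a question about commutative subalgebras of $M_n(\mathbb{F}_q)$. First I would record a structural reduction: every element $g$ of an Abelian $p$-subgroup $A\leq GL_n(q)$ is a $p$-element, hence unipotent in characteristic $p$, since $g^{p^k}=I$ forces $(g-I)^{p^k}=g^{p^k}-I=0$ by the Frobenius expansion; thus $g-I$ is nilpotent. (Equivalently, by Sylow's theorem $A$ is conjugate into the $p$-Sylow subgroup $UT_n(q)$ already identified in the excerpt, and conjugation preserves both order and commutativity.)

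Next comes the linearization, which carries the real content. I would set $W=\mathrm{span}_{\mathbb{F}_q}\{\,g-I : g\in A\,\}\subseteq M_n(\mathbb{F}_q)$. Because $A$ is Abelian, the identity $(g-I)(h-I)=(gh-I)-(g-I)-(h-I)$ shows that $W$ is closed under multiplication and commutative, i.e.\ a (non-unital) commutative subalgebra; and since a linear combination of commuting nilpotent matrices is nilpotent, every element of $W$ is nilpotent, so $I\notin W$. Hence $\mathcal{A}:=\mathbb{F}_q I\oplus W$ is a commutative \emph{unital} subalgebra with $\dim_{\mathbb{F}_q}\mathcal{A}=1+\dim_{\mathbb{F}_q}W$. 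Finally $A\subseteq I+W$, so $|A|\leq |I+W|=q^{\dim W}$. Therefore the bound $|A|\leq q^{\lfloor n^2/4\rfloor}$ will follow once I establish $\dim W\leq \lfloor n^2/4\rfloor$, equivalently $\dim\mathcal{A}\leq \lfloor n^2/4\rfloor+1$.

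This last inequality is exactly Schur's classical theorem on the maximal dimension of a commutative subalgebra of a full matrix algebra, and proving it is the main obstacle. The approach I would take is the short module-theoretic argument: regard $\mathbb{F}_q^n$ as an $\mathcal{A}$-module, choose a vector $v$ for which the cyclic submodule $\mathcal{A}v$ has maximal dimension $r$, and exploit commutativity together with the maximality of $r$ to bound $\dim\mathcal{A}\leq r(n-r)+1$; since $r(n-r)\leq \lfloor n^2/4\rfloor$ for $0\leq r\leq n$, the claim follows. The delicate part is the bookkeeping that produces the sharp factor $r(n-r)$ rather than a weaker estimate, and that is where the argument must be carried out carefully.

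It remains to exhibit a subgroup meeting the bound. Writing $n=a+b$ with $a=\lfloor n/2\rfloor$ and $b=\lceil n/2\rceil$, I would take
\begin{equation}
A_0=\left\{\begin{pmatrix} I_a & 0\\ B & I_b\end{pmatrix} : B\in M_{b\times a}(\mathbb{F}_q)\right\},
\end{equation}
a subset of $UT_n(q)$. A direct block computation gives
\begin{equation}
\begin{pmatrix} I_a & 0\\ B & I_b\end{pmatrix}\begin{pmatrix} I_a & 0\\ B' & I_b\end{pmatrix}=\begin{pmatrix} I_a & 0\\ B+B' & I_b\end{pmatrix},
\end{equation}
so $A_0$ is Abelian, isomorphic to the additive group $(M_{b\times a}(\mathbb{F}_q),+)$ of exponent $p$, hence an Abelian $p$-subgroup, and its order is $q^{ab}=q^{\lfloor n^2/4\rfloor}$. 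This proves attainment and completes the argument. I note that the oddness of $p$ is never used: the unipotence reduction, the subalgebra bound, and the construction are all characteristic-free, so the hypothesis appears inessential for the \emph{order} statement and is presumably inherited from the cited source.
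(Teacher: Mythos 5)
The paper never proves this statement: it is imported verbatim from Goozeff~\cite{abeliano}, and the only internal content is the remark that the block-unipotent group $\overline{G}$ of~\eqref{grupoabelianomaximal} attains the maximum --- which is exactly your $A_0$ up to transposition. So there is no in-paper argument to compare against, and the question is whether your blind proof stands on its own. Most of it does. The reduction of the upper bound is correct and clean: every element of an Abelian $p$-subgroup $A$ is unipotent, the identity $(g-I)(h-I)=(gh-I)-(g-I)-(h-I)$ makes $W=\mathrm{span}\{g-I\}$ a commutative subalgebra consisting of nilpotents, $A\subseteq I+W$ gives $|A|\le q^{\dim W}$, and the bound $\dim(\mathbb{F}_qI\oplus W)\le\lfloor n^2/4\rfloor+1$ is Schur's inequality --- though over $\mathbb{F}_q$ you must invoke Jacobson's extension of Schur's theorem to arbitrary fields, not Schur's original complex-field result. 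The attainment construction is also correct, and your observation that oddness of $p$ is never used in this route is right for the order bound (it does, however, contradict the paper's own unsubstantiated aside that $\overline{G}$ is not of maximal order when $p=2$; given your argument, that aside appears to be wrong, or to refer to a classification statement rather than to the order).

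The genuine gap is at the crux: the entire upper bound rests on Schur--Jacobson, which you do not prove, and the sketch you offer for it is false as stated. With $r=\max_v\dim\mathcal{A}v$, the inequality $\dim\mathcal{A}\le r(n-r)+1$ already fails for the diagonal algebra in $M_n(\mathbb{F}_q)$, where $r=n$ and the claimed bound is $1$ while $\dim\mathcal{A}=n$; and even for the local algebras relevant here it fails on the extremal example itself, where for $n=2a$ one computes $r=a+1$, so $r(n-r)+1=a^2=\lfloor n^2/4\rfloor$, one less than the true dimension $\lfloor n^2/4\rfloor+1$. So the ``delicate bookkeeping'' you defer is not bookkeeping around a correct skeleton --- the skeleton itself must be replaced (e.g.\ by Jacobson's inductive argument on the radical filtration, or Gustafson's module-theoretic proof, both of which bound the annihilator of a generating set rather than a single cyclic submodule). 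If quoting Schur--Jacobson is admissible, your proof is complete and arguably more informative than the paper's bare citation; if the inequality must be established, the central step is missing and your proposed route to it does not work.
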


In the proof of Theorem~\ref{maxorder}, the author states that the group consisting of the matrices of the form
\begin{equation}\label{grupoabelianomaximal}
\overline{G}:=\left\{\left[\begin{array}{cc}
    Id_{n-r} & H_r  \\
    0_{r} & Id_{n-r}
\end{array}\right]: H_r \in \mathbb{F}_q ^{r \times r}\right\}\mbox{, with }r=\left\lfloor \frac{n}{2}\right\rfloor,
\end{equation}
attains the maximal order.

Theorem~\ref{maxorder} establishes that the largest Abelian $p$-subgroup is obtained for $p$ an odd prime and $r=\left\lfloor \frac{n}{2}\right\rfloor$. In spite of the fact that this subgroup is not the largest one for $p=2$, we consider this possibility in the construction of orbit codes, since it is still possible to obtain orbit codes with large cardinality. From now on, we consider $n=2r$.
%

Let $V$ be a $k$-dimensional vector subspace of $\mathbb{F}_q ^n $ such that $V=rs(\mathcal{V})$, with 
\begin{equation}\label{representacaodeespacovetorial}
\mathcal{V}=
\left[\begin{array}{ll}
A_{l \times r}     & B_{l \times r}  \\
C_{(k-l) \times r}     & D_{(k-l) \times r}
\end{array}\right].
\end{equation}
If $G= \left[\begin{array}{cc}
    Id_{r} & H_r  \\
    0_r &  Id_r
\end{array}\right]\in \overline{G}$, then

\begin{eqnarray*}
rs(\mathcal{V})&=&rs(\mathcal{V}G) \Leftrightarrow \left[\begin{array}{ll}
A_{l \times r}     & B_{l \times r}  \\
C_{(k-l) \times r}     & D_{(k-l) \times r}
\end{array}\right] = \left[\begin{array}{ll}
A_{l \times r}     &A_{l \times r}H_r + B_{l \times r} \\
C_{(k-l) \times r}     &C_{(k-l) \times r}H_r +  D_{(k-l) \times r}
\end{array}\right]\nonumber \\
&&\nonumber \\
&\Leftrightarrow& A_{l \times r}H_r =0_{l \times r} \mbox{ and } C_{(k-l) \times r}H_r =0_{(k-l) \times r}\nonumber
\end{eqnarray*}

Denoting by $\widetilde{G}:=\left\{H_r \in \mathbb{F}_q ^{r \times r} : A_{l \times r}H_r =0_{l \times r} \mbox{ and } C_{(k-l) \times r}H_r =0_{(k-l) \times r}\right\} $, we get
\begin{equation}
\left|C_{\overline{G}} (V) \right|=\frac{|\overline{G}|}{\left|Stab_{\overline{G}} (V)\right|}=\frac{|\overline{G}|}{\left|\widetilde{G} \right|}.
\end{equation}

For the proposed construction, we may compute a bound for the minimum subspace distance of an Abelian non-cyclic orbit code $C_{\overline{G}} (V)$ according to the rank $(rk)$ of the submatrices which are part of the matrices of $G$ and the matrix whose row space is $V$.

\begin{theorem}\label{distanciaminimaviarank}
Let $G=\left[\begin{array}{cc}
    Id_r &  H_r \\
    0_r  &  Id_r
\end{array}\right] \in \overline{G}$ and $V$ be a $k$-dimensional vector subspace of $\mathbb{F}_q ^n$ as described in~\eqref{representacaodeespacovetorial}, then
\begin{equation}
d_S \left(rs(\mathcal{V}), rs(\mathcal{V}G) \right)\leq  2rk\left(\left[\begin{array}{c}
     A_{l \times r}H_{r}  \\
     C_{(l-r) \times r}H_{r},
\end{array}\right]\right).
\end{equation}

If $k=r=\frac{n}{2}$ and $\mathcal{V}=\left[\begin{array}{cc}
    Id_ r & A_r
\end{array}\right]$, then $d_S \left(rs(\mathcal{V}), rs(\mathcal{V}G) \right)=2 rk(\left[H_r \right])$.
\end{theorem}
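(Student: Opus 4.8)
The plan is to turn the distance computation into a single rank estimate for the sum of the two codewords. Since $G\in\overline{G}\subseteq GL_n(q)$ is invertible, $rs(\mathcal{V}G)$ is again $k$-dimensional, so the subspace-distance formula reads $d_S(rs(\mathcal{V}),rs(\mathcal{V}G))=2k-2\dim\!\big(rs(\mathcal{V})\cap rs(\mathcal{V}G)\big)$. Using $\dim(U\cap W)=\dim U+\dim W-\dim(U+W)$, this becomes $d_S(rs(\mathcal{V}),rs(\mathcal{V}G))=2\dim\!\big(rs(\mathcal{V})+rs(\mathcal{V}G)\big)-2k$, and the sum on the right is exactly the row space of the stacked $2k\times n$ matrix $\left[\begin{smallmatrix}\mathcal{V}\\ \mathcal{V}G\end{smallmatrix}\right]$. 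Hence it suffices to control $rk\!\left[\begin{smallmatrix}\mathcal{V}\\ \mathcal{V}G\end{smallmatrix}\right]$.

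First I would invoke the block computation of $\mathcal{V}G$ already performed above and apply the elementary row operation that subtracts the top block $\mathcal{V}$ from the bottom block $\mathcal{V}G$; this preserves the row space. Because the left-hand columns of $\mathcal{V}G$ agree with those of $\mathcal{V}$ and the right-hand columns differ precisely by the $H_r$-terms, the bottom block becomes $\left[\begin{smallmatrix}0_{l\times r} & A_{l\times r}H_r\\ 0_{(k-l)\times r} & C_{(k-l)\times r}H_r\end{smallmatrix}\right]$. Subadditivity of rank then gives $\dim\!\big(rs(\mathcal{V})+rs(\mathcal{V}G)\big)\le rk(\mathcal{V})+rk\!\left[\begin{smallmatrix}A_{l\times r}H_r\\ C_{(k-l)\times r}H_r\end{smallmatrix}\right]=k+rk\!\left[\begin{smallmatrix}A_{l\times r}H_r\\ C_{(k-l)\times r}H_r\end{smallmatrix}\right]$, and substituting into the rewritten distance formula yields the claimed inequality.

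For the equality clause I would specialize to $k=r=\tfrac{n}{2}$ and $\mathcal{V}=[\,Id_r\mid A_r\,]$, so that $l=k=r$ and the $C$-block is vacuous. The same row operation reduces $\left[\begin{smallmatrix}\mathcal{V}\\ \mathcal{V}G\end{smallmatrix}\right]$ to the block upper-triangular matrix $\left[\begin{smallmatrix}Id_r & A_r\\ 0_r & H_r\end{smallmatrix}\right]$. The one point needing a genuine argument (rather than a mere bound) is that the two row spaces meet only in $0$: a row $[\,0\mid h\,]$ lies in $rs[\,Id_r\mid A_r\,]$ only if some $v$ satisfies $v=v\,Id_r=0$ and $h=vA_r=0$, forcing $h=0$. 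Thus the rank is \emph{exactly} $r+rk(H_r)$, the subadditivity bound is attained, and back-substitution gives $d_S=2\,rk([H_r])$.

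The step I expect to require the most care is this trivial-intersection check in the equality case, since it is what upgrades the inequality to an identity; the remainder is routine dimension bookkeeping for the subspace metric combined with the invertibility of $G$.
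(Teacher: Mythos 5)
Your proposal is correct and follows essentially the same route as the paper: both rewrite $d_S$ as $2\dim\!\big(rs(\mathcal{V})+rs(\mathcal{V}G)\big)-2k$, realize the sum as the rank of the stacked matrix, and bound that rank by subtracting the top block from the bottom (the paper packages this as the inequality $rk\!\left(\left[\begin{smallmatrix}X\\ Y\end{smallmatrix}\right]\right)\le rk(Y-X)+\min\{rk(X),rk(Y)\}$, which is your row operation plus subadditivity). Your explicit trivial-intersection check in the equality case is a slightly more detailed justification of the rank computation the paper simply asserts for $\left[\begin{smallmatrix}Id_r & A_r\\ 0_r & H_r\end{smallmatrix}\right]$, but the argument is the same.
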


\begin{proof}
\begin{eqnarray*}
d_S \left(rs(\mathcal{V}), rs(\mathcal{V}G) \right)&=& \mbox{dim} (rs(\mathcal{V}))+\mbox{dim} (rs(\mathcal{V}G)) -2\mbox{dim} (rs(\mathcal{V}) \cap rs(\mathcal{V}G) )\\
&=& 4k-2\mbox{dim} (rs(\mathcal{V}) \cap rs(\mathcal{V}G) ) -2k \\&=&2\mbox{dim} (rs(\mathcal{V})+ rs(\mathcal{V}G)) -2k\\
&=&2 rk\left(\left[\begin{array}{cc}
    A_{l \times r} & B_{l \times r} \\
    C_{(k-l) \times r} & D_{(k-l) \times r} \\
    A_{l \times r}  & A_{l\times r }H_{r} +B_{l \times r}\\
    C_{(k-l) \times r} & C_{(l-k)\times r }H_{r} +D_{(k-l) \times r}
\end{array}\right]\right) - 2k\\
&\leq & 2rk\left(\left[\begin{array}{c}
     A_{l \times r}H_{r}  \\
     C_{(l-r) \times r}H_{r},
\end{array}\right]\right),
\end{eqnarray*}
where the inequality comes from $rk\left(\left[\begin{array}{c}
     X  \\
     Y
\end{array}\right]\right) \leq rk\left(Y-X\right)+\min\{rk(X),rk(Y)\}.
$

In particular, if $k=r$ and $\mathcal{V}=\left[Id_r \,\, A_r \right]$, then $\mathcal{V}G = \left[Id_r \,\,\, (H + A)_r  \right]$ and
\begin{eqnarray*}
d_S \left(rs(\mathcal{V}), rs(\mathcal{V}G) \right)&=& 
2 rk\left(\left[\begin{array}{cc}
    Id_{r} & A_{ r} \\
    Id_{r} & H_r + A_{r}
\end{array}\right]\right) - 2k\\
&=& 2 rk \left(\left[\begin{array}{cc}
       Id_{r} & A_{ r} \\
   0_r & H_r
\end{array}\right]\right) - 2k\\
&= &2 rk \left(H_{r}\right).
\end{eqnarray*}
\end{proof}

\begin{remark}
If $k=r$, $\mathcal{V}=\left[Id_r \,\, A_r \right]$ and $\mathcal{V}G = \left[Id_r \,\,\, (H + A)_r  \right]$, then $V=VG$ if, and only if, $H_r = 0_r$, namely, $C_{\overline{G}} (V)$ has trivial stabilizer.
\end{remark}

Besides the results obtained so far, the matrix shape of the elements of $\overline{G}$ is very useful in describing their subgroups. Indeed, for the elements
\[
G_1=\left[\begin{array}{cc}
    Id_r  &  {H_1}_{r} \\
    0_r  & Id_r
\end{array}\right] \quad , \, \quad  G_2=\left[\begin{array}{cc}
    Id_r &  {H_2}_{r} \\
    0_r & Id_r
\end{array}\right] \in \overline{G},
\]
we have
\[
G_1 . G_2 =\left[\begin{array}{cc}
    Id_r  &  \left({H_1} + {H_2}\right)_{r} \\
    0_r  & Id_r
\end{array}\right]\in \overline{G} \quad \mbox{and} \quad G_1 . G_1 ^{-1} = Id_n, \mbox{ since} \quad G_1 ^{-1} = \left[\begin{array}{cc}
    Id_r  &  {- H_1}_{r} \\
    0_r  & Id_r
\end{array}\right].
\]
Hence, the product operation of the matrices in $\overline{G}$ may be reduced to the sum operation of the matrices in $\mathbb{F}_q ^{r \times r}$. Moreover, if we want to describe a subgroup of $\overline{G}$ generated by the elements
\[
\left[\begin{array}{cc}
    Id_r & {H_1}_r \\
     0_r & Id_r
\end{array}\right],  \left[\begin{array}{cc}
    Id_r & {H_2}_r \\
     0_r & Id_r
\end{array}\right],\ldots ,  \left[\begin{array}{cc}
    Id_r & {H_L}_r \\
     0_r & Id_r
\end{array}\right],
\]
then we need to describe the corresponding additive subgroup (or $\mathbb{F}_q$-vector subspace of $\mathbb{F}_q ^{r \times r}$) whose generators are ${H_1}_r , {H_2}_r ,\ldots ,{H_L}_r$.

From the previous consideration together with Theorem~\ref{distanciaminimaviarank}, we use the rank-metric code construction to provide a systematic way to obtain new Abelian non-cyclic orbit codes with larger cardinality than that obtained by the constructions of the known Abelian orbit codes. In this paper we consider Delsarte's matrix representation of MRD codes~\cite{delsarte}, as reported by Gabidulin in~\cite{stateofart}.

\begin{example}\label{exampleemf3}
Let $V= rs\left(\left[\begin{array}{cccccc}
     1&0&0&1&2&0  \\
     0&1&0&1&0&0  \\
     0&0&1&0&2&1
\end{array}\right]\right)$ be a $3$-dimensional vector subspace of $\mathcal{G}_3 (6,3)$. Given the $3 \times 3$-matrices
\begin{eqnarray}
H_1 &=&\left[\begin{array}{ccc}
     1&0&0  \\
     0&1&0  \\
     0&0&0
\end{array}\right], H_2 =\left[\begin{array}{ccc}
     0&0&0  \\
     0&1&0  \\
     0&0&1
\end{array}\right], H_3 =\left[\begin{array}{ccc}
     0&0&1  \\
     0&1&0  \\
     0&1&0
\end{array}\right], H_4 =\left[\begin{array}{ccc}
     0&0&2  \\
     2&0&0  \\
     0&1&0
\end{array}\right],\nonumber \\ H_5 &=&\left[\begin{array}{ccc}
     1&1&2  \\
     0&1&2  \\
     2&0&1
\end{array}\right] \mbox{ and } H_6 =\left[\begin{array}{ccc}
     0&0&0  \\
     0&0&1  \\
     2&1&1
\end{array}\right],
\end{eqnarray}
the rank-metric code $\mathcal{C}=\left\langle H_1 , H_2 , H_3, H_4, H_5, H_6 \right\rangle$ is an MRD code according to the Singleton bound, since its cardinality is equal to $729$ with minimum rank distance $d_R (\mathcal{C})=2$. Now, take the orbit code $C_{\overline{G}} (V)$, with
\begin{eqnarray}
\overline{G} &=&\left\langle \left[\begin{array}{cc}
    Id_3 & H_1  \\
    0_3 & Id_3
\end{array}\right],\left[\begin{array}{cc}
    Id_3 & H_2  \\
    0_3 & Id_3
\end{array}\right],\left[\begin{array}{cc}
    Id_3 & H_3  \\
    0_3 & Id_3
\end{array}\right], \left[\begin{array}{cc}
    Id_3 & H_4  \\
    0_3 & Id_3
\end{array}\right],\left[\begin{array}{cc}
    Id_3 & H_5  \\
    0_3 & Id_3
\end{array}\right]\right.,\nonumber \\
&&\left.\left[\begin{array}{cc}
    Id_3 & H_6  \\
    0_3 & Id_3
\end{array}\right] \right\rangle< GL_6 (3).
\end{eqnarray}
Thus, $C_{\overline{G}} (V)$ is a $(6,729,4,3)$-(ternary) constant-dimension code. Note that $\left|C_{\overline{G}} (V)\right|$ is close to the best lower bound on $A_3(6,4,3)$, which is 754. This and several other lower and upper bounds on $A_q (n,d,k)$ can be seen in http://subspacecodes.uni-bayreuth.de (see~\cite{table}). Finally, the largest known Abelian orbit code with the same length, distance and dimension, which is cyclic, has $364$ codewords, which is less than one-half of the number of codewords of $C_{\overline{G}}(V)$.
\end{example}
%

Considering a restriction involving the parameters $q$ and $n$, we state that the Abelian non-cyclic orbit codes described in this paper are better than any other orbit code construction known so far, even the non-Abelian cases. In fact, the non-Abelian orbit codes with generating group $\langle \alpha \rangle \rtimes \langle \sigma \rangle$~\cite{iran} have cardinality upper bounded by $n\left(\frac{q^n -1}{q-1}\right)$ and the minimum subspace distance less than or equal to $2k-2$. To the best of our knowledge, such a construction leads to the best orbit codes (It is worth mentioning that the construction like the one shown in ~\cite{china} leads to cyclic codes, which are not necessarily orbit codes). If there is an $\left(n,n\left(\frac{q^n -1}{q-1}\right),n-2,\frac{n}{2} \right)$-non-Abelian orbit code, then the $\left(n,q^n ,n-2,\frac{n}{2}\right)$-Abelian non-cyclic orbit codes being proposed are always better for $q-1\geq n$, since $q^n$ is less than $n\left(\frac{q^n -1}{q-1}\right)$, then $q^n <n\left(\frac{q^n -1}{q-1}\right)\leq(q-1)\left(\frac{q^n -1}{q-1}\right)\leq q^n -1$, a contradiction.
%

\section{Geometrically Uniform Subspace Codes}\label{codsubgeomunif}
From the classical coding theory, the class of geometrically uniform (GU) codes as proposed by Forney in~\cite{forney}, encompasses the Slepian group codes~\cite{codigodegrupo} and the lattices codes~\cite{sloane}, and its importance is due to the inherent richness of its algebraic and geometric structures.

Let $(M,d)$ be a metric space, with $M$ describing the ambient space and $d$ a metric. We want to emphasize that the GU codes in consideration may belong to more general spaces other than the Euclidean space.

\begin{definition}\label{codigosgeometricamenteuniformes}
Let $(M,d)$ be a metric space and $C \subset (M,d)$. Then $C$ is a GU code if, given two codewords $c_1$ and $c_2$ in $C$, there exists an isometry $u_{c_1 , c_2}$ such that $u_{c_1 , c_2}$ maps $c_1$ to $c_2$ while leaving $C$ invariant.

\begin{equation}
u_{c_1 , c_2}\left(c_1 \right)=c_2 \quad \mbox{ and }\quad u_{c_1 , c_2}(C)=C.
\end{equation}
\end{definition}

Since $C$ is GU, it follows that there exists a symmetry group $\Gamma(C)$ which acts transitively on $C$, i.e., given any $c \in C$, $C$ may be defined as the orbit code
\begin{equation}\label{orbita}
C=\left\{u(c) : u \in \Gamma(C)\right\}.
\end{equation}
%
%
\begin{definition}\label{groupominimo}
A generating group $G$ of $C$ is a subgroup of the symmetry group $\Gamma(C)$ that is minimally sufficient to generate $C$ from any arbitrary codeword $c \in C$. That is, if $G$ is a generating group of $C$, and $c \in C$, then $C$ is the orbit of $c$ under $G$, $C=\left\{u\left(c \right) : u \in G\right\}$, and the map $m:G\rightarrow C$ defined by $m(u)=u\left(c \right)$ is one-to-one.
\end{definition}

If $C$ is GU with generating group $G$ acting transitively on $C$, 
 then $C$ will be denoted by $C_G (c)$.

\begin{remark}
According to~\cite{zhewan}, for codes defined in the usual Euclidean metric space $\left(\mathbb{R}^n , d_E \right)$,
with $d_E$ denoting the Euclidean metric, there exists an equivalence between codes matched to groups~\cite{loeliger} and GU codes. This equivalence may be also extended naturally to the metric space $\left(\mathcal{G}_q (n,k), d_S \right)$.
\end{remark}
%

\begin{definition}
A \emph{Voronoi region} $R_V \left(c_1 \right)$ associated with any codeword $c_1\in C_G (c) \subseteq (M,d) $ is the set of all points in $M$ that are at least as close to $c_1$ as to any other codeword $c_2 \in C_G (c)$
\begin{equation}
\displaystyle{R_V \left(c_1 \right)=\left\{x\in M : d\left(c_1 ,x \right) = \min_{c_2 \in C_G (c)} d\left(c_2 ,x \right) \right\}}.
\end{equation}
\end{definition}

\begin{definition}
The \emph{global distance profile} $DP\left(c_1 \right)$ associated with any codeword $c_1 \in C_G (c) \subseteq (M,d)$ is the set of distances to all other codewords of $C_G (c)$.
\begin{equation}
DP\left(c_1\right)=\left\{ d\left(c_1 , c_2 \right) , c_2 \in C_G (c) \right\}.
\end{equation}
\end{definition}

As a consequence of these two previous definitions, one important result from \cite{forney} is the concept of geometrical uniformity.

\begin{theorem}\label{geounif}~\cite{forney}
If $C_G (c)$ is a GU code in $(M,d)$, then
\begin{itemize}
\item[(i)] All the Voronoi regions $R_V \left(c_1 \right)$ have the same shape, and indeed $R_V (c_2 )=$\linebreak $u_{c_1 , c_2} \left[R_V (c_1)\right]$, with $u_{c_1 , c_2}$ any isometry that takes $c_1$ to $c_2$,
\item[(ii)] The global distance profile $DP\left(c \right)$ is the same for all $c \in C_G (c)$, and it is denoted by $DP\left(C_G (c) \right)$.
\end{itemize}
\end{theorem}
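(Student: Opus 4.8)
The plan is to lean entirely on the single structural fact guaranteed by Definition~\ref{codigosgeometricamenteuniformes}: for any pair $c_1, c_2 \in C_G(c)$ there is an isometry $u = u_{c_1,c_2}$ with $u(c_1) = c_2$ and $u(C_G(c)) = C_G(c)$. Because $u$ is a distance-preserving bijection that merely permutes the codewords among themselves, it transports both the minimization defining a Voronoi region and the set of distances defining the profile. I would dispatch (ii) first, since it is the shorter computation and isolates the re-indexing device reused in (i).

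For (ii), fix $c_1, c_2$ and set $u = u_{c_1,c_2}$. For each $w \in C_G(c)$ the isometry gives $d(c_1, w) = d(u(c_1), u(w)) = d(c_2, u(w))$. As $w$ runs over $C_G(c)$ so does $u(w)$, since $u$ is a bijection with $u(C_G(c)) = C_G(c)$; hence $\{d(c_1, w) : w \in C_G(c)\} = \{d(c_2, w') : w' \in C_G(c)\}$, i.e.\ $DP(c_1) = DP(c_2)$. As $c_1, c_2$ are arbitrary, the profile does not depend on the chosen codeword, which is exactly what justifies writing $DP(C_G(c))$.

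For (i), I would establish $u[R_V(c_1)] = R_V(c_2)$ by double inclusion, using the same re-indexing. Take $x \in R_V(c_1)$, so $d(c_1, x) = \min_{w \in C_G(c)} d(w, x)$. Applying $u$ gives $d(c_2, u(x)) = d(u(c_1), u(x)) = d(c_1, x)$, while, substituting $w = u(w')$, one has $\min_{w} d(w, u(x)) = \min_{w'} d(w', x) = d(c_1, x)$; thus $d(c_2, u(x))$ realizes the minimal distance from $u(x)$ to $C_G(c)$, so $u(x) \in R_V(c_2)$ and $u[R_V(c_1)] \subseteq R_V(c_2)$. Since $u^{-1}$ is itself an isometry carrying $c_2$ to $c_1$ and fixing $C_G(c)$, the symmetric argument yields the reverse inclusion, hence equality. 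Two regions related by a single isometry is precisely the assertion that they have ``the same shape,'' so this simultaneously settles the first clause of (i).

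I expect no genuine obstacle; the only point needing care is the re-indexing of the minimum and of the distance set, where one must invoke both that $u$ is a bijection and that $u(C_G(c)) = C_G(c)$, so that ranging $w$ over the codewords and ranging $u(w)$ over them coincide. A secondary subtlety worth flagging is that the phrase ``any isometry that takes $c_1$ to $c_2$'' in (i) is legitimate precisely because the argument uses nothing beyond $u(c_1) = c_2$ together with set-invariance; were $u(C_G(c)) = C_G(c)$ to fail for some distance-preserving map sending $c_1 \mapsto c_2$, the conclusion could break, so it is the invariance hypothesis that does the real work.
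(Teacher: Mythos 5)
Your proof is correct, and it is the standard argument for this classical result of Forney: transport the minimization defining the Voronoi region and the distance set defining the profile along the code-preserving isometry $u_{c_1,c_2}$, using that $u$ permutes the codewords so that re-indexing $w \mapsto u(w)$ is legitimate. The paper itself gives no proof of Theorem~\ref{geounif} (it is imported verbatim from \cite{forney}), so there is nothing to compare against; your argument, including the correct observation that it is the invariance $u(C_G(c)) = C_G(c)$ rather than merely $u(c_1)=c_2$ that does the work, is complete as written.
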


From Theorem~\ref{geounif}, the regular arrangement of the codewords of the GU codes is due to the transitive action of $G$ on them. In addition, given a Voronoi region, the remaining Voronoi regions may be obtained by the group action, and consequently, all of them have the same shape and properties.

Let $C_G (c)$ be a GU code and  $H\lhd G$ a normal subgroup of $G$. A geometrically uniform partition was defined by Forney~\cite{forney} as a partition of $C_G (c)$ generated by the factor group $G/H=\left\{ Hg_1 ,  H g_2 ,..., Hg_t \right\}$, with
 $t=\left|G/H\right|$ and $g_1$ the identity element of $G$. Thus, by using a coset of $G/H$, a subcode $C_{H}\left(g_i c\right)$ is defined as follows.
\begin{equation}\label{defpartitiongu}
C_{H}\left(g_i c\right)=C_{Hg_i } (c):=\left\{h \left(g_i  (c)\right) : h\in H \right\},
\end{equation}
and $C_{G/H}:=\left\{C_{Hg_1} (c), C_{Hg_2} (c), ..., C_{Hg_t} (c) \right\}$ such that $\displaystyle{C_G (c)=\bigcup_{i=1} ^{t} C_{ Hg_i} (c)}$.

\begin{theorem}~\cite{forney}\label{partigeomunif}
Let $C_{G/H} (c)=\left\{C_{Hg_1} (c) ,  C_{Hg_2} (c) ,..., C_{Hg_t} (c) \right\}$ be a geometrically uniform partition of $C_{G} (c)$. Then the subcodes $C_{Hg_i} (c)$ of $C_G (c)$ in this partition are geometrically uniform, mutually congruent, and have $H$ as a common generating group.
\end{theorem}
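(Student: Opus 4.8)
The plan is to verify the three assertions in turn, treating geometric uniformity and the generating-group property together and reserving the normality of $H$ for the congruence statement. Throughout I would use only that every element of $\Gamma(C_G(c))$ is an isometry that leaves $C_G(c)$ invariant, together with Definitions~\ref{codigosgeometricamenteuniformes} and~\ref{groupominimo}, and the fact that by construction $C_{Hg_i}(c)=\{h(g_i(c)):h\in H\}$ is the orbit of the base point $g_i(c)$ under $H$.

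First I would show that each subcode $C_{Hg_i}(c)$ is GU with generating group $H$. Geometric uniformity is immediate from Definition~\ref{codigosgeometricamenteuniformes}: given two codewords $h_1(g_i(c))$ and $h_2(g_i(c))$ of the subcode, the element $u=h_2 h_1^{-1}$ lies in $H\le\Gamma(C_G(c))$ and hence is an isometry; it sends $h_1(g_i(c))$ to $h_2(g_i(c))$, and since $h_2 h_1^{-1}H=H$ it merely permutes $C_{Hg_i}(c)$, leaving it invariant. For the generating-group claim, $H$ acts transitively on $C_{Hg_i}(c)$ by construction, so it remains to check that $m\colon H\to C_{Hg_i}(c)$, $m(h)=h(g_i(c))$, is injective. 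Because $G$ is a generating group of $C_G(c)$, the map $g\mapsto g(c)$ is one-to-one, which is to say $Stab_G(c)$ is trivial; the stabilizer of the translated point $g_i(c)$ is then $g_i\,Stab_G(c)\,g_i^{-1}$, again trivial, and a fortiori its intersection with $H$ is trivial. Hence $m$ is one-to-one and $H$ is a generating group of each subcode.

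The remaining step, and the one where normality is essential, is mutual congruence. Here I would exhibit the explicit isometry $u=g_j g_i^{-1}\in G$ and show $u(C_{Hg_i}(c))=C_{Hg_j}(c)$. Writing $u(C_{Hg_i}(c))=\{g_j g_i^{-1} h\,g_i(c):h\in H\}$ and invoking $H\lhd G$ twice---once as $g_i^{-1}Hg_i=H$ to absorb the inner conjugation, and once as $g_jH=Hg_j$ to move $g_j$ past $H$---collapses the set to $\{h\,g_j(c):h\in H\}=C_{Hg_j}(c)$. Since $u$ is an isometry of $\Gamma(C_G(c))$, the two subcodes are congruent, and as $i,j$ were arbitrary the whole family is mutually congruent.

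The main obstacle is bookkeeping with the group action rather than any genuine difficulty: one must use normality in precisely the form $g^{-1}Hg=H$ (equivalently $gH=Hg$) when transporting orbits, and must confirm that the trivial-stabilizer condition required by Definition~\ref{groupominimo} is honestly inherited from $c$ to the translated base point $g_i(c)$, which is exactly where the conjugate $g_i\,Stab_G(c)\,g_i^{-1}$ enters. I expect no serious gaps, since the argument parallels Forney's classical proof with the isometries now drawn from the projective-geometry symmetry group $\Gamma(C_G(c))$ rather than from the Euclidean isometry group.
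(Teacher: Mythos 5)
The paper does not prove this statement: it is quoted verbatim from Forney~\cite{forney} as a known result, so there is no in-paper proof to compare against. Your argument is correct and is essentially the classical one the citation points to --- $H$-orbits give geometric uniformity and a common generating group (with injectivity of $h\mapsto h(g_i(c))$ inherited from the triviality of $Stab_G(c)$ via conjugation), and normality of $H$ is used exactly where you place it, to show that the isometry $g_jg_i^{-1}\in G\le\Gamma(C_G(c))$ carries $C_{Hg_i}(c)$ onto $C_{Hg_j}(c)$.
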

%
%
%

From Definition~\ref{codigosgeometricamenteuniformes}, in order to obtain a complete understanding of geo\-me\-tri\-cal\-ly uniform subspace codes, the set of isometries acting on the projective space $\mathcal{P}_q (n)$ must be established~\cite{isometriasana}. This result is based on the famous Fundamental Theorem of Projective Geometry~\cite{baer} and \cite{teoriadegrupos}. Before presenting it, let us introduce two essential lemmas.

\begin{lemma}\cite{isometriasana}
If $\lambda: \mathcal{P}_q (n) \rightarrow \mathcal{P}_q (n)$ is an isometry, then $\lambda(\{0\}) \in \left\{\{0\}, \mathbb{F}_q ^n\right\}$.
\end{lemma}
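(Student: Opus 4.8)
The plan is to isolate a purely metric property that singles out $\{0\}$ and $\mathbb{F}_q^n$ among all subspaces and then invoke the fact that $\lambda$, being an isometry, must preserve it. First I would record the two elementary distance formulas $d_S(\{0\}, W) = \dim W$ and $d_S(\mathbb{F}_q^n, W) = n - \dim W$, both immediate from the definition $d_S(V,W) = \dim V + \dim W - 2\dim(V \cap W)$. Combining this definition with the bounds $0 \le \dim(V \cap W)$ and $\dim(V \cap W) \ge \dim V + \dim W - n$ yields $d_S(V,W) \le \min\{\dim V + \dim W,\, 2n - \dim V - \dim W\} \le n$ for every pair, so the diameter of $(\mathcal{P}_q(n), d_S)$ is $n$. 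This bound is attained, since any subspace $V$ lies at distance $n$ from any complement $W$ of $V$.

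Next I would observe that $\mathcal{P}_q(n)$ is finite and distinct subspaces are at positive distance, so the distance-preserving map $\lambda$ is injective and hence bijective. As a consequence, for every subspace $V$ the quantity $N(V) := \left|\{\, W \in \mathcal{P}_q(n) : d_S(V,W) = n \,\}\right|$ is invariant under $\lambda$: the assignment $U \mapsto \lambda(U)$ restricts to a bijection between $\{\, U : d_S(V,U) = n \,\}$ and $\{\, W : d_S(\lambda(V),W) = n \,\}$, so that $N(\lambda(V)) = N(V)$.

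The heart of the argument is then the computation of $N$. From $d_S(\{0\}, W) = \dim W$, the only $W$ realizing $d_S(\{0\}, W) = n$ is $W = \mathbb{F}_q^n$, whence $N(\{0\}) = 1$; symmetrically $N(\mathbb{F}_q^n) = 1$. For a subspace $V$ with $0 < \dim V = k < n$, every complement of $V$ lies at distance $n$ from $V$, and the number of complements of a $k$-dimensional subspace of $\mathbb{F}_q^n$ equals $q^{k(n-k)} \ge q \ge 2$; hence $N(V) > 1$. Therefore $N(V) = 1$ if and only if $V \in \{\{0\}, \mathbb{F}_q^n\}$. Applying this to $V = \{0\}$ gives $N(\lambda(\{0\})) = N(\{0\}) = 1$, which forces $\lambda(\{0\}) \in \{\{0\}, \mathbb{F}_q^n\}$, as claimed.

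The main obstacle is the final counting step, namely ensuring that an intermediate subspace admits strictly more than one subspace at the maximal distance $n$. This reduces precisely to the statement that a proper nonzero subspace has more than one complement, which the formula $q^{k(n-k)}$ settles cleanly for $0 < k < n$. The remaining ingredients -- the diameter bound, the automatic bijectivity of $\lambda$, and the invariance of $N$ -- are routine once the metric setup is in place.
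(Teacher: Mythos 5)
Your proof is correct: the invariant $N(V)$ (the number of subspaces at the maximal distance $n$ from $V$) is preserved by any isometry, equals $1$ exactly for $\{0\}$ and $\mathbb{F}_q^n$, and is at least $q^{k(n-k)}\geq 2$ for any intermediate subspace since every complement realizes distance $n$. The paper itself states this lemma without proof, quoting it from the cited reference, and your counting argument is essentially the standard one used there, so there is nothing further to reconcile.
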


\begin{lemma}\cite{isometriasana}\label{zeroazero}
Let $\lambda$ be an isometry and $V \in \mathcal{P}_q (n)$ arbitrary. If $\lambda(\{0\})=\{0\}$, then
\begin{equation}\label{equacao1111}
dim (V)= d_S (\{0\},V)=d_S (\{0\},\lambda(V))=dim\lambda(V).
\end{equation}
Otherwise, $\lambda(\{0\})=\mathbb{F}_q ^n$. As a consequence,
\begin{equation*}
dim (V)= d_S (\{0\},V)=d_S \left(\mathbb{F}_q ^n ,\lambda(V)\right)=n- dim\lambda(V).
\end{equation*}
\end{lemma}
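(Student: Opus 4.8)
The plan is to reduce everything to two elementary computations of the subspace distance from $V$ to the two trivial subspaces, and then to invoke the preceding lemma together with the defining property of an isometry. First I would record that, directly from $d_S(X,Y) = \dim X + \dim Y - 2\dim(X\cap Y)$ and the facts $\{0\}\cap V = \{0\}$ and $\mathbb{F}_q^n \cap V = V$, one has
$$d_S(\{0\},V) = \dim V \qquad\text{and}\qquad d_S(\mathbb{F}_q^n, V) = n - \dim V$$
for every $V \in \mathcal{P}_q(n)$. These two identities are the backbone of the argument; in particular the first already yields the equality $\dim(V) = d_S(\{0\},V)$ that opens both chains in the statement.

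Next I would use that $\lambda$ is an isometry, so that $d_S(\{0\},V) = d_S(\lambda(\{0\}), \lambda(V))$, and split according to the preceding lemma, which guarantees $\lambda(\{0\}) \in \{\{0\}, \mathbb{F}_q^n\}$. In the case $\lambda(\{0\}) = \{0\}$, substituting into the isometry identity and applying the first computation to $\lambda(V)$ in place of $V$ gives $\dim V = d_S(\{0\}, \lambda(V)) = \dim\lambda(V)$, which is exactly the first displayed conclusion. In the case $\lambda(\{0\}) = \mathbb{F}_q^n$, the same isometry identity reads $\dim V = d_S(\mathbb{F}_q^n, \lambda(V))$, and applying the second computation to $\lambda(V)$ gives $\dim V = n - \dim\lambda(V)$, the second displayed conclusion.

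I do not expect a genuine obstacle here: the content of the lemma is carried entirely by the two boundary distance formulas above and by the dichotomy already established in the previous lemma. The only point requiring a little care is the bookkeeping: one must apply the isometry in the direction $d_S(\{0\}, V) = d_S(\lambda(\{0\}), \lambda(V))$ (evaluating the known distance on the domain side and the unknown on the image side), and keep the roles of $\dim V$ and $\dim\lambda(V)$ straight in the second case, where the relation is $n - \dim\lambda(V)$ rather than $\dim\lambda(V)$. Once the two elementary identities are in hand, both conclusions follow by a single substitution in each case.
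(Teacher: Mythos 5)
Your proof is correct and complete: the two boundary identities $d_S(\{0\},V)=\dim V$ and $d_S(\mathbb{F}_q^n,V)=n-\dim V$, combined with the isometry property and the dichotomy $\lambda(\{0\})\in\{\{0\},\mathbb{F}_q^n\}$ from the preceding lemma, yield both displayed chains exactly as stated. The paper itself gives no proof of this lemma (it is cited from~\cite{isometriasana}), and your argument is the standard one that the source uses, so there is nothing to flag.
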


To preserve the dimension of the codewords and, consequently, for the computation of the minimum distance, we suppose that all isometries satisfy the condition shown in~\eqref{equacao1111}. Thus, the characterization of the isometries acting on $\mathcal{P}_q (n)$ is shown next.

\begin{theorem}~\cite{isometriasana}\label{caractiisometrias}
Every isometry $\lambda$ acting on $\mathcal{P}_q (n)$, for $n>2$, $dim(V)=dim(\lambda(V))$ and any $V \in \mathcal{P}_q (n)$, is induced by a semilinear transformation $\left(A,\sigma_i \right)\in P\Gamma L_n \left(\mathbb{F}_q \right)$, such that $P\Gamma L_n \left(\mathbb{F}_q \right):=\left(GL_n \left(\mathbb{F}_q \right) /Z_n \left(\mathbb{F}_q \right)\right) \rtimes Aut\left(\mathbb{F}_q \right)$ is the projective semilinear group, with $Z_n \left(\mathbb{F}_q \right)$ the subgroup of scalar matrices and $Aut\left(\mathbb{F}_q \right)$ the group of automorphisms of $\mathbb{F}_q$.
\end{theorem}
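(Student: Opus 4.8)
The plan is to reduce the metric statement to a purely order-theoretic (incidence) one and then invoke the Fundamental Theorem of Projective Geometry cited in the excerpt. The key observation is that, under the standing hypothesis $\dim(\lambda(V))=\dim(V)$ (which for the relevant isometries is guaranteed by Lemma~\ref{zeroazero}), the subspace distance encodes the inclusion relation of the projective lattice. Indeed, for subspaces $V,W$ one has $d_S(V,W)=\dim V+\dim W-2\dim(V\cap W)$, and since $\dim(V\cap W)\le\min\{\dim V,\dim W\}$, the minimal possible value $|\dim W-\dim V|$ is attained precisely when $V\subseteq W$ or $W\subseteq V$. Thus $V\subseteq W$ if and only if $d_S(V,W)=\dim W-\dim V$, a condition expressed entirely in terms of distances and dimensions, both of which $\lambda$ preserves.

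First I would record that $\lambda$ is a bijection: a distance-preserving map is injective because $d_S(x,y)>0$ for $x\neq y$ forces $d_S(\lambda(x),\lambda(y))>0$, and since $\mathcal{P}_q(n)$ is finite, injectivity yields surjectivity. Next, combining the preservation of $d_S$ with the preservation of dimension and the inclusion criterion above, one gets $V\subseteq W$ if and only if $\lambda(V)\subseteq\lambda(W)$. Hence $\lambda$ is an order-preserving bijection of the lattice of subspaces of $\mathbb{F}_q^n$, i.e. a collineation of the associated projective geometry: it carries points to points, lines to lines, and preserves incidence.

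With $\lambda$ identified as a collineation, I would apply the Fundamental Theorem of Projective Geometry~\cite{baer,teoriadegrupos}. Since $n>2$, the projective geometry has dimension at least two, which is exactly the hypothesis the theorem requires, so every such collineation is induced by a semilinear transformation of $\mathbb{F}_q^n$, that is, by a pair $(A,\sigma)$ with $A\in GL_n(\mathbb{F}_q)$ and $\sigma\in Aut(\mathbb{F}_q)$. To finish, I would verify that two semilinear maps induce the same collineation on $\mathcal{P}_q(n)$ exactly when they differ by a nonzero scalar, i.e. by an element of the center $Z_n(\mathbb{F}_q)$ of scalar matrices; this yields the identification of the collineation group with the projective semilinear group $P\Gamma L_n(\mathbb{F}_q)=(GL_n(\mathbb{F}_q)/Z_n(\mathbb{F}_q))\rtimes Aut(\mathbb{F}_q)$ and places $\lambda$ in it.

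The main obstacle is the invocation of the Fundamental Theorem of Projective Geometry itself: its proof, which reconstructs the field automorphism $\sigma$ and the linear part $A$ from the incidence data alone, is nontrivial and is precisely why one cites~\cite{baer,teoriadegrupos} rather than reproving it. Everything on the metric side collapses to the single inclusion criterion, so the real care lies in the passage from collineation to semilinear map and in correctly quotienting by the scalar matrices to recover the semidirect-product structure of $P\Gamma L_n(\mathbb{F}_q)$. A minor but genuine point is the hypothesis $n>2$: for projective dimension one the Fundamental Theorem fails, so the argument does not extend to that degenerate case.
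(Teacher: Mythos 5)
Your argument is correct and follows exactly the route the paper indicates for this result: the paper gives no proof of its own, citing~\cite{isometriasana} and noting that the theorem ``is based on the famous Fundamental Theorem of Projective Geometry~\cite{baer} and~\cite{teoriadegrupos},'' which is precisely your reduction of the metric hypothesis to the inclusion criterion $V\subseteq W \Leftrightarrow d_S(V,W)=|\dim W-\dim V|$, followed by the passage from collineation to semilinear map modulo scalars. Your observation that $n>2$ is needed for the Fundamental Theorem, and that Lemma~\ref{zeroazero} justifies restricting to dimension-preserving isometries, matches the paper's framing.
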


The next result, Corollary~\ref{caracterizacaodasisometrias}, is a characterization of all isometries acting on $\mathcal{P}_q (n)$.

\begin{corollary}\cite{isometriasana}\label{caracterizacaodasisometrias}
Every isometry $\lambda$ acting on $\mathcal{P}_q (n)$, for $n > 2$, $dim(V) = dim(\lambda(V))$ and any
$V \in \mathcal{P}_q (n)$, is induced by a semilinear transformation $(A,\varphi) \in P\Gamma L_n \left(\mathbb{F}_q \right)$.
\end{corollary}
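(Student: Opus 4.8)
The plan is to obtain the corollary directly from Theorem~\ref{caractiisometrias}, reading it as a restatement in which the explicit Frobenius power $\sigma_i$ appearing there is replaced by a generic field automorphism $\varphi$. The only content beyond the theorem is the observation that, over a finite field, the field component of any semilinear map is necessarily one of the automorphisms already parametrized by the theorem, so that the pair $(A,\sigma_i)$ supplied by the theorem \emph{is} a pair $(A,\varphi)\in P\Gamma L_n\left(\mathbb{F}_q\right)$.

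First I would fix an arbitrary isometry $\lambda$ of $\mathcal{P}_q(n)$ satisfying $dim(V)=dim(\lambda(V))$ for every $V\in\mathcal{P}_q(n)$, and pin down the behaviour of $\lambda$ on the trivial subspace. By the first lemma above, $\lambda(\{0\})\in\left\{\{0\},\mathbb{F}_q^n\right\}$. Were $\lambda(\{0\})=\mathbb{F}_q^n$, then Lemma~\ref{zeroazero} would force $dim(\lambda(V))=n-dim(V)$ for all $V$; combined with the hypothesis $dim(\lambda(V))=dim(V)$ this gives $2\,dim(V)=n$, which cannot hold simultaneously for every $V$ once $n>2$ (it suffices to pick a subspace $V$ with $dim(V)\neq n/2$). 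Hence $\lambda(\{0\})=\{0\}$, so $\lambda$ is exactly the dimension-preserving isometry treated by Theorem~\ref{caractiisometrias}, and I may invoke that theorem to conclude $\lambda$ is induced by a semilinear transformation $(A,\sigma_i)\in P\Gamma L_n\left(\mathbb{F}_q\right)$.

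To finish, I would identify the generic automorphism $\varphi$ of the statement with $\sigma_i$. Since $\mathbb{F}_q$ is finite with $q=p^t$, its automorphism group $\mathrm{Aut}\left(\mathbb{F}_q\right)$ is cyclic, generated by the Frobenius map, so every $\varphi\in\mathrm{Aut}\left(\mathbb{F}_q\right)$ equals $\sigma_i\colon x\mapsto x^{p^i}$ for some $i$. Thus the pair $(A,\sigma_i)$ produced by the theorem is precisely a pair $(A,\varphi)\in P\Gamma L_n\left(\mathbb{F}_q\right)$, which is the assertion of the corollary.

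I do not expect a substantive obstacle: all the difficulty is absorbed into Theorem~\ref{caractiisometrias}, whose proof rests on the Fundamental Theorem of Projective Geometry. The only points that require care are the short case analysis on $\lambda(\{0\})$ --- which is what rules out the dimension-reversing alternative --- and the bookkeeping matching the generic $\varphi$ to an explicit Frobenius power. If anything, the main subtlety is making sure the exclusion of $\lambda(\{0\})=\mathbb{F}_q^n$ genuinely uses both $n>2$ and the arbitrariness of $V$, rather than silently presupposing dimension preservation.
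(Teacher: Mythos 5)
Your derivation is correct and matches the paper's treatment: the corollary is presented as an immediate restatement of Theorem~\ref{caractiisometrias}, the only content being that the Frobenius power $\sigma_i$ is itself an element of $Aut\left(\mathbb{F}_q\right)$, so the pair $(A,\sigma_i)$ is a pair $(A,\varphi)\in P\Gamma L_n\left(\mathbb{F}_q\right)$. The only remark worth making is that your case analysis on $\lambda(\{0\})$ is redundant (and does not actually need $n>2$, only the choice $V=\{0\}$), since the hypothesis $dim(V)=dim(\lambda(V))$ is already assumed verbatim in both the theorem and the corollary.
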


As it is shown in~\cite{isometriasana}, we can extend the action of $GL_n \left(\mathbb{F}_q \right)$ to $P\Gamma L_n \left(\mathbb{F}_q \right)$ in order to define orbit codes $C_G (V)$ such that $G\leq P\Gamma L_n \left(\mathbb{F}_q \right)$.
%

After the classification of the isometries in $\mathcal{P}_q (n)$, as provided by Corollary~\ref{caracterizacaodasisometrias}, we are able to characterize all geometrically uniform subspace codes in $\mathcal{G}_q (n,k)$, as follows. 

\begin{proposition}\label{codigodeorbitaegu}
Given $C\subseteq \mathcal{G}_q (n,k)$, $C$ is a geometrically uniform subspace code if, and only if, $C= C_G (V)$ is an orbit code, with $G \leq P\Gamma L_n \left(\mathbb{F}_q \right)$.
\end{proposition}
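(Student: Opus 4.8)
The plan is to prove the two implications of the equivalence separately, with Corollary~\ref{caracterizacaodasisometrias} serving as the bridge that turns abstract isometries into concrete elements of $P\Gamma L_n\left(\mathbb{F}_q\right)$ and conversely. Throughout I would use the standing convention recorded just before Corollary~\ref{caracterizacaodasisometrias}, namely that every isometry under consideration is dimension-preserving, i.e. satisfies~\eqref{equacao1111}; this is exactly what keeps the codewords of $C$ inside a single Grassmannian $\mathcal{G}_q(n,k)$.

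For the direction ``orbit code $\Rightarrow$ GU'', I would start from $C=C_G(V)$ with $G\leq P\Gamma L_n\left(\mathbb{F}_q\right)$ and check Definition~\ref{codigosgeometricamenteuniformes} directly. First I note that every element of $P\Gamma L_n\left(\mathbb{F}_q\right)$ acts on $\mathcal{P}_q(n)$ as a dimension-preserving isometry, since a semilinear bijection preserves both dimension and the dimension of intersections, hence the subspace distance $d_S$; this is the converse side of the characterization in Theorem~\ref{caractiisometrias}. Then, given codewords $c_1=g_1(V)$ and $c_2=g_2(V)$ with $g_1,g_2\in G$, I take $u_{c_1,c_2}:=g_2 g_1^{-1}\in G$. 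It maps $c_1$ to $c_2$, and since left multiplication by $u_{c_1,c_2}$ permutes $G$, it leaves $C$ invariant: $u_{c_1,c_2}(C)=\{(u_{c_1,c_2}g)(V):g\in G\}=\{g'(V):g'\in G\}=C$, which is precisely the GU property.

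For the converse ``GU $\Rightarrow$ orbit code'', I would invoke~\eqref{orbita}: a GU code $C$ admits a symmetry group $\Gamma(C)$ of isometries acting transitively on $C$, so that $C=\{u(c):u\in\Gamma(C)\}$ for any fixed $c\in C$. Because $C\subseteq\mathcal{G}_q(n,k)$ and each $u\in\Gamma(C)$ maps $C$ onto $C$, every such symmetry sends a $k$-dimensional codeword to a $k$-dimensional codeword, so I may work inside the dimension-preserving isometries. Applying Corollary~\ref{caracterizacaodasisometrias} to each $u\in\Gamma(C)$ replaces it by a semilinear transformation $(A,\varphi)\in P\Gamma L_n\left(\mathbb{F}_q\right)$ inducing the same map on subspaces; letting $G\leq P\Gamma L_n\left(\mathbb{F}_q\right)$ be the resulting subgroup (or a generating subgroup in the sense of Definition~\ref{groupominimo}) and setting $V:=c$, transitivity yields $C=\{g(V):g\in G\}=C_G(V)$, an orbit code.

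The step I expect to require the most care is guaranteeing that the symmetries of $C$ really are dimension-preserving, so that Corollary~\ref{caracterizacaodasisometrias} applies and deposits $G$ inside $P\Gamma L_n\left(\mathbb{F}_q\right)$ rather than producing a dimension-reversing isometry. By Lemma~\ref{zeroazero} an isometry either fixes $\{0\}$ or sends $\{0\}$ to $\mathbb{F}_q^n$, the latter turning a $k$-dimensional subspace into an $(n-k)$-dimensional one; such a map can preserve $\mathcal{G}_q(n,k)$ only when $n=2k$, and even then it is excluded by the convention~\eqref{equacao1111}. One must also keep the hypothesis $n>2$ of Theorem~\ref{caractiisometrias} and Corollary~\ref{caracterizacaodasisometrias} in force. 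Modulo these bookkeeping points, the equivalence reduces to the two short arguments above.
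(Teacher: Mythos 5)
Your proposal is correct and follows essentially the same route as the paper: the orbit-to-GU direction uses $g_2g_1^{-1}$ as the required symmetry exactly as in the paper's proof, and the GU-to-orbit direction invokes Equation~\eqref{orbita} together with Corollary~\ref{caracterizacaodasisometrias} to realize the symmetry group inside $P\Gamma L_n\left(\mathbb{F}_q\right)$. Your additional bookkeeping on why the symmetries must be dimension-preserving (via Lemma~\ref{zeroazero} and the standing convention~\eqref{equacao1111}) is a point the paper leaves implicit, but it does not change the argument.
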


\begin{proof}
Given $G \leq P\Gamma L_n \left(\mathbb{F}_q \right)$ and $V$ a $k$-dimensional vector subspace of $\mathbb{F}_q ^n$, according to Definition~\ref{codigosgeometricamenteuniformes} (Equation~\eqref{orbita}), every geometrically uniform code in $\mathcal{G}_q (n,k)$ is an orbit code $C_G (V)$. Conversely, if $C_G (V)$ is an orbit code then, by Corollary~\ref{caracterizacaodasisometrias}, the elements of $G$ act as isometries on $\mathcal{P}_q (n)$. In particular, $G$ acts as a symmetry group on $C_G (V)$, because for any distinct codewords $Vg_i ,Vg_j \in C_G (V)$, the group structure assures the existence of a symmetry ($g_{i} ^{-1} g_j \in G$) that takes $Vg_i$ to $Vg_j$, for any $0 \leq i \leq j \leq |G|-1$. This is exactly the definition of GU codes.
\end{proof}

Given $\alpha$ a primitive element of $\mathbb{F}_{q^n}$, for $n>2$, consider the constant dimension code $C=C_{\langle \alpha \rangle} (V) \cup C_{\langle \alpha \rangle} (\sigma(V))$ such that $\sigma \in Aut\left(\mathbb{F}_{q^n} \right)\setminus\{Id_n \}$, with $\sigma(x)=x^q$, for any $x\in \mathbb{F}_{q^n}$. This code is a cyclic subspace code~\cite{etzion}, since it is closed to cyclic-shift of $\alpha$, but it is not an orbit code.

\begin{corollary}\label{cicliconaoegu}
Cyclic codes are geometrically uniform subspace codes if, and only if, they are orbit codes.
\end{corollary}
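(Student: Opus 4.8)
The plan is to derive this corollary as a direct specialization of Proposition~\ref{codigodeorbitaegu}, which already establishes that a code $C \subseteq \mathcal{G}_q(n,k)$ is geometrically uniform if and only if it is an orbit code $C_G(V)$ with $G \leq P\Gamma L_n(\mathbb{F}_q)$. Since this equivalence holds for every code in $\mathcal{G}_q(n,k)$, restricting attention to the subclass of cyclic codes yields the desired biconditional immediately; the work is essentially to instantiate the proposition in each direction.

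First I would argue the forward implication. Suppose $C$ is a cyclic code, in the sense of being closed under the cyclic shift $x \mapsto \alpha x$, that is geometrically uniform. By Proposition~\ref{codigodeorbitaegu}, any geometrically uniform code in $\mathcal{G}_q(n,k)$ is an orbit code $C_G(V)$ for some $G \leq P\Gamma L_n(\mathbb{F}_q)$; hence $C$ is an orbit code. For the converse, suppose $C$ is a cyclic code that is an orbit code $C_G(V)$. Again by Proposition~\ref{codigodeorbitaegu}, every such orbit code is geometrically uniform, so $C$ is geometrically uniform.

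The point worth emphasizing, and the reason this corollary is not vacuous, is that the class of cyclic subspace codes strictly contains the class of orbit codes: the code $C_{\langle \alpha \rangle}(V) \cup C_{\langle \alpha \rangle}(\sigma(V))$ exhibited just before this corollary is cyclic yet fails to be an orbit code, and therefore fails to be geometrically uniform. The corollary thus pinpoints exactly which cyclic codes inherit the geometric-uniformity structure, namely those that arise as a single orbit under a subgroup of $P\Gamma L_n(\mathbb{F}_q)$, and so it rules out unions of several orbits of the type just described.

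The only technical check I anticipate is confirming that the cyclic codes under consideration lie in a single Grassmannian $\mathcal{G}_q(n,k)$, so that Proposition~\ref{codigodeorbitaegu} is applicable. This is routine: both the cyclic shift by the unit $\alpha$ and the Frobenius-type automorphism $\sigma$ are dimension-preserving semilinear maps, so a cyclic code built from a $k$-dimensional subspace remains constant dimension. Consequently I expect no genuine obstacle beyond the bookkeeping of this dimension hypothesis, and the content of the corollary is really the recognition that it is nothing more than the cyclic-code shadow of the already-proved characterization.
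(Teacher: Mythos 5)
Your argument is correct and matches the paper's intent exactly: the corollary is stated as an immediate specialization of Proposition~\ref{codigodeorbitaegu} to the subclass of cyclic codes, with the preceding example $C_{\langle \alpha \rangle}(V) \cup C_{\langle \alpha \rangle}(\sigma(V))$ showing the restriction is not vacuous, which is precisely what you observe. No genuinely different route is taken, so there is nothing further to add.
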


From now on, we write GUSC to refer to geometrically uniform subspace codes.

\begin{example}\label{exampleperfildist}
Let $\alpha$ be a root of the primitive polynomial $f(x)=x^4 +x +1 \in \mathbb{F}_2 [x]$, with $\displaystyle{\alpha \in \mathbb{F}_{2^4} \simeq \mathbb{F}_2 [x]/\langle f(x)\rangle}$. Moreover, given $V_1 =\left\{ 0,1, \alpha ,\alpha^4 \right\}$, a $2$-dimensional vector \linebreak subspace of $\mathbb{F}_{2^4}$, consider the cyclic orbit code $C_{\left\langle \alpha \right\rangle} \left(V_1 \right)$, with $\left|C_{\left\langle \alpha \right\rangle} \left(V_1 \right)\right|=15$.
%
%
%
%
As \linebreak$\displaystyle{\mathcal{G}_2 (4,2)= \bigcup_{i=1} ^3 C_{\langle \alpha \rangle} \left(V_i \right)}$, with $V_2=\left\{0,1, \alpha^2, \alpha^8 \right\}$ and $V_3 =\mathbb{F}_{2^2} = \left\{0,1, \alpha^5, \alpha^{10} \right\}$, then the Voronoi region of the codeword $V_1$ is
\begin{eqnarray*}
R_V \left(V_1 \right)&=&\left\{\alpha V_1 ,\alpha^3 V_1, \alpha^4 V_1, \alpha^{11} V_1 , \alpha^{12} V_1 ,\alpha^{14} V_1 , V_2 ,  \alpha V_2 , \alpha^{2} V_2 , \alpha^{4} V_2 , \alpha^{7} V_2 , \alpha^{8} V_2 , \alpha^{11} V_2 ,\right.\\
&&\left. \alpha^{13} V_2, \alpha^{14} V_2 , V_3 , \alpha V_3 ,  \alpha^{4} V_3 \right\}.
\end{eqnarray*}

The element $\alpha^{11}$ acts as a symmetry on $C_{\langle \alpha \rangle} \left(V_1 \right)$ and according to Theorem~\ref{geounif}, the Voronoi region $u_{V_1 , \alpha^{11} V_1 } \left[R_V \left(V_1 \right)\right]=R_V \left(\alpha^{11} V_1 \right)$ of the codeword $\alpha^{11} V_1$ is
\begin{eqnarray*}
R_V \left(\alpha^{11} V_1 \right)&=&\left\{\alpha^{12} V_1 , \alpha^{14} V_1 , V_1 ,\alpha^{7} V_1 , \alpha^{8} V_1 , \alpha^{10} V_1 , \alpha^{11} V_2 , \alpha^{12} V_2 ,\alpha^{13} V_2 , V_2 , \alpha^{3} V_2 , \alpha^4 V_2 ,\right.\\
&&\left.  \alpha^7 V_2 , \alpha^9 V_2 , \alpha^{10} V_2, \alpha V_3 , \alpha^{2} V_3 , V_3 \right\}.
\end{eqnarray*}
\end{example}

According to Theorem~\ref{partigeomunif}, a partition of a GU code is directly related with the structure of the generating group and, as it will be discussed ahead, from this partition it is possible to obtain some results related with, for instance, the reduced number of computations to obtain the minimum subspace distance of a Abelian GUSC. Thus, for completeness regarding geometrically uniform partitions, a result from group theory related to normal subgroups of $GL_n \left(\mathbb{F}_q \right)$ is required. The notation of the next theorem has been slightly modified in order to fit properly to the case in consideration since the original statement of the theorem encompasses a more general situation other than the general linear groups over finite fields.

\begin{theorem}\cite{matrixgroup}\label{caracterizacaodossubgruposnormais}
Given $n>1$, then every subgroup of $GL_n \left(\mathbb{F}_q \right)$ that contains $SL_n \left(\mathbb{F}_q \right)$ (the special linear group), or it is contained in $Z_n \left(\mathbb{F}_q \right)$, is a normal subgroup of $GL_n \left(\mathbb{F}_q \right)$. If $n>2$, or $n=2$, but $q\neq 2$ or $q\neq 3$, then any normal subgroup of $GL_n \left(\mathbb{F}_q \right)$ contains $SL_n \left(\mathbb{F}_q \right)$, or is contained in the center of $GL_n \left(\mathbb{F}_q \right)$, which is exactly $Z_n \left(\mathbb{F}_q \right)$.
\end{theorem}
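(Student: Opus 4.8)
The plan is to prove the two implications separately, treating the converse (the classification of normal subgroups) as the substantive part. For the first assertion I would rely on the determinant homomorphism $\det\colon GL_n(\mathbb{F}_q)\to \mathbb{F}_q^{*}$, whose kernel is $SL_n(\mathbb{F}_q)$. Any subgroup $N$ containing $SL_n(\mathbb{F}_q)$ corresponds, under the correspondence theorem, to a subgroup of the abelian quotient $GL_n(\mathbb{F}_q)/SL_n(\mathbb{F}_q)\cong \mathbb{F}_q^{*}$; since every subgroup of an abelian group is normal, $N$ is normal in $GL_n(\mathbb{F}_q)$. If instead $N\subseteq Z_n(\mathbb{F}_q)$, then every element of $N$ commutes with all of $GL_n(\mathbb{F}_q)$, so $gng^{-1}=n\in N$ for all $g$, and $N$ is again normal. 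This disposes of the easy direction for every $n>1$.

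For the converse I would fix a normal subgroup $N\lhd GL_n(\mathbb{F}_q)$ and study $N\cap SL_n(\mathbb{F}_q)$. Since $SL_n(\mathbb{F}_q)\lhd GL_n(\mathbb{F}_q)$, the intersection $N\cap SL_n(\mathbb{F}_q)$ is a normal subgroup of $SL_n(\mathbb{F}_q)$. The key structural input is the Jordan--Dickson theorem: for $n>2$, and for $n=2$ provided $q\notin\{2,3\}$, the quotient $PSL_n(\mathbb{F}_q)=SL_n(\mathbb{F}_q)/Z(SL_n(\mathbb{F}_q))$ is simple. Consequently every normal subgroup of $SL_n(\mathbb{F}_q)$ either equals $SL_n(\mathbb{F}_q)$ or is contained in the center $Z(SL_n(\mathbb{F}_q))=SL_n(\mathbb{F}_q)\cap Z_n(\mathbb{F}_q)$, and I would split into two cases according to which alternative holds for $N\cap SL_n(\mathbb{F}_q)$.

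In the first case $N\cap SL_n(\mathbb{F}_q)=SL_n(\mathbb{F}_q)$, so $SL_n(\mathbb{F}_q)\subseteq N$ and the first alternative of the conclusion holds. In the second case $N\cap SL_n(\mathbb{F}_q)\subseteq Z(SL_n(\mathbb{F}_q))$, and I must upgrade this to $N\subseteq Z_n(\mathbb{F}_q)$. Here I would fix $a\in N$ and consider the map $\psi_a\colon SL_n(\mathbb{F}_q)\to Z(SL_n(\mathbb{F}_q))$ given by $\psi_a(g)=a^{-1}g^{-1}ag$; normality forces each such commutator into $N\cap SL_n(\mathbb{F}_q)\subseteq Z(SL_n(\mathbb{F}_q))$, and a short computation using the centrality of the values shows that $\psi_a$ is a group homomorphism into an abelian group. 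Since $SL_n(\mathbb{F}_q)$ is perfect (again excluding the two small cases), $\psi_a$ is trivial, so $a$ centralizes $SL_n(\mathbb{F}_q)$. Finally, because the natural representation of $SL_n(\mathbb{F}_q)$ on $\mathbb{F}_q^{n}$ is absolutely irreducible, Schur's lemma forces $a$ to be a scalar matrix, i.e.\ $a\in Z_n(\mathbb{F}_q)$; as $a$ was arbitrary, $N\subseteq Z_n(\mathbb{F}_q)$.

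I expect the genuine obstacle to be assembling the deep group-theoretic inputs rather than the bookkeeping: the simplicity of $PSL_n(\mathbb{F}_q)$ and the perfectness of $SL_n(\mathbb{F}_q)$ are exactly the facts that break down for $(n,q)=(2,2)$ and $(2,3)$, which is precisely why those cases are excluded from the hypothesis. Granting those two facts together with the absolute irreducibility of the standard module, the remaining steps are routine, so in a self-contained treatment the bulk of the effort would go into proving (or carefully citing) the Jordan--Dickson simplicity theorem.
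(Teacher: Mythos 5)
The paper does not actually prove this statement: Theorem~\ref{caracterizacaodossubgruposnormais} is quoted verbatim (with slightly adapted notation) from Suprunenko's \emph{Matrix Groups} \cite{matrixgroup}, so there is no in-paper argument to compare yours against. Judged on its own, your proof is the standard textbook argument and is correct. The easy direction via the determinant map and the correspondence theorem is fine, and in the converse the chain of reductions works: $N\cap SL_n(\mathbb{F}_q)$ is normal in $SL_n(\mathbb{F}_q)$; simplicity of $PSL_n(\mathbb{F}_q)$ together with perfectness of $SL_n(\mathbb{F}_q)$ (both valid outside $(n,q)\in\{(2,2),(2,3)\}$, which is what the theorem's hypothesis --- read with ``and'' in place of the paper's typographical ``or'' --- excludes) gives the dichotomy for $N\cap SL_n(\mathbb{F}_q)$; the commutator map $\psi_a(g)=a^{-1}g^{-1}ag$ does land in $N\cap SL_n(\mathbb{F}_q)\subseteq Z(SL_n(\mathbb{F}_q))$ and is a homomorphism because its values are scalar matrices, hence central in all of $GL_n(\mathbb{F}_q)$; perfectness kills it; and Schur's lemma applied to the absolutely irreducible natural module (equivalently, a direct computation with transvections $I+E_{ij}$) forces $a$ to be scalar. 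The only inputs you leave unproved are the Jordan--Dickson simplicity theorem and the perfectness of $SL_n(\mathbb{F}_q)$, which you flag explicitly and which are exactly the facts one would cite in a self-contained treatment.
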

%

\begin{definition}
A \emph{normal series} of a group $G$, with $e$  the identity element of $G$, is a sequence of subgroups
\begin{equation}
G=G_{0} \geq G_{1} \geq ... \geq G_{m-1} \geq G_m =\{e\}
\end{equation}
such that $G_{i+1}\vartriangleleft G_i$, for all $i=0,...,m-1$.

A \emph{composition series} is a normal series  
%
such that, for all $i=0,...,m-1$, either $G_{i+1}$ is a maximal normal subgroup of $G_i$, or $G_{i+1}=G_i$.
\end{definition}

%

If $H\lhd G < GL_n \left(\mathbb{F}_q \right)$ and $|G/H|=t$, then the \emph{geometrically uniform partition}~\eqref{defpartitiongu} of $C_G (V)$ induced by $H$ can be seen as a union of orbit subcodes. Indeed,
\begin{equation}
C_{G/H} (V) := \left\{C_{H g_1 } (V) , C_{ H g_2} (V) ,..., C_{ H g_{t}} (V)\right\} = \left\{C_{H} (V_1) , C_{ H} \left(V_2 \right) ,..., C_{H} \left(V_{t} \right)\right\},
\end{equation}
with $V_i = rs(g_i \mathcal{V})$ the $k$-dimensional subspaces of $\mathbb{F}_q ^n$, for $i=1,...,t$ and $g_1 =Id_n $. By Theorem~\ref{partigeomunif}, the orbit subcodes of $C_{G/H} (V)$ are mutually congruent.

Now, we consider some concepts from~\cite{biglieri} which will be adapted to our approach in order to reduce the number of computations to obtain the minimum subspace distance of Abelian orbit codes.

Let $B \subseteq \mathcal{P}_q (n)$ be a set. We define the \emph{intradistance set} $D_S (B)$ as the multiset of all the subspace distances among pairs of subspaces of $B$, that is,
\begin{equation}
D_S (B):=\left\{d_S \left(V_1 , V_2\right) : V_1, V_2 \in B\right\}.
\end{equation}
%
%

If $B_1$ and $B_2$ are two disjoint subsets of $\mathcal{P}_q (n)$, the interdistance set $D_S \left(B_1 ,B_2 \right)$ is the multiset of all the subspace distances among subspaces of $B_1$ and $B_2$, i.e.,
\begin{equation}
D_S \left(B_1 , B_2 \right):=\left\{\left\{d_S \left(V_1 , V_2\right) : V_1 \in B_1 \mbox{ and } V_2 \in B\right\}\right\}.
\end{equation}

\begin{definition}\label{particaojustaaaaaa}
A partition $B_1, B_2, ..., B_m$ of a set $X\subseteq \mathcal{G}_q (n,k)$ is called \emph{fair} if, for each $1\leq i \neq j \leq m$, hold
\begin{itemize}
\item[(i)] $B_i \neq B_j$,
\item[(ii)] $\left|B_i \right| = \left|B_j \right|$ and
\item[(iii)] $D_S \left(B_i \right)=D_S \left(B_j \right)$.
\end{itemize}
\end{definition}

Given $H \lhd G$, Theorem~\ref{partigeomunif} states that all geometrically uniform partitions $C_{G/H} (V)$ provide fair partitions.

\begin{definition}\label{cadeiadeparticoes}
The chain partition of a set $X \subseteq \mathcal{G}_q (n,k)$ is called \emph{fair} if any two elements of the partition at the same level of the chain include the same number of vectors and have equal intradistance sets.
\end{definition}

Given a normal series $G=G_0 \geq G_{1} \geq...\geq G_m \neq \left\{Id_n \right\}$ of $G<GL_n \left(\mathbb{F}_q \right)$, by successive applications of Theorem~\ref{partigeomunif}, we note that all partitions in different levels are fair and, therefore, we obtain a fair chain partition according to Definition~\ref{cadeiadeparticoes}.

\begin{example}
Let $p(x)=x^6 +x+ 1 \in \mathbb{F}_2 [x]$ be a primitive polynomial in $\mathbb{F}_2 [x]$, $\alpha $ a root of $p(x)$ such that $\displaystyle{\alpha \in \mathbb{F}_{2^6} \simeq \mathbb{F}_2 [x]/\langle p(x)\rangle}$ and $V=\{0,1,\alpha^8 , \alpha^{10}, \alpha^{20}, \alpha^{48}, \alpha^{59}, \alpha^{61}\}$ a $3$-di\-men\-si\-onal vector subspace of $\mathbb{F}_{2^6}$. From the composition series $\left\langle\alpha\right\rangle > \left\langle\alpha^3 \right\rangle > \left\langle\alpha^9 \right\rangle$, we obtain the following fair chain partition of $C_{\langle \alpha \rangle} (V)$
\begin{equation}
C_{\langle \alpha \rangle} (V) =  \bigcup_{i=0} ^2 C_{\langle \alpha^3 \rangle} \left(\alpha^i V \right)= \bigcup_{i=0} ^8 C_{\langle \alpha^9 \rangle} \left(\alpha^i V \right).
\end{equation}
%
\end{example}

\begin{definition}\label{defdopolparaintersetdist}
Given $H \lhd G$ and $g_i \in G$, let $C_H \left(V_i \right)=C_H \left(g_i V \right)$ be a subcode of $C_G (V)$. The \emph{distance profile} associated with $g \in G$ and $C_H \left(V_i \right)$ is represented by the following polynomial in the indeterminate $w$,
\begin{equation}
F\left(w,g,C_H \left(V_i \right)\right)=\sum_{d} a(d)w^d,
\end{equation}
with $a(d)$ the number of elements of $C_H \left(V_i \right)$ with subspace distance $d$ with respect to an element of $C_H \left(g V_i \right)=C_H \left(g g_i V \right)$.
\end{definition}

\begin{example}\label{examplesobreospolin}
Let $p(x)=x^6 +x+1 \in \mathbb{F}_2 [x]$ be a primitive polynomial. Let $\alpha$ be a root of $p(x)$, with $\alpha \in \mathbb{F}_{2^6} \simeq \displaystyle{\mathbb{F}_2 [x]/\langle p(x)\rangle}$ and $V=\left\{0,1,\alpha^{8},\alpha^{10},\alpha^{20},\alpha^{48},\alpha^{59},\alpha^{61}\right\}$ a $3$-dimensional subspace of $\mathbb{F}_{2^6}$. The cyclic orbit code $C_{\left\langle \alpha^3 \right\rangle} (V)$ can be partitioned as
\begin{equation}
C_{\left\langle \alpha^3 \right\rangle} (V) = \bigcup_{i=0} ^2 C_{\left\langle \alpha^9 \right\rangle} \left(\alpha^i V \right), \mbox{ with }
\end{equation}
\begin{eqnarray*}
C_{\langle \alpha^9 \rangle} (V)&:=&\left\{ \alpha^{9i} V : 0 \leq i \leq 6 \right\}=\left\{\left\{0,1,\alpha^{8},\alpha^{10},\alpha^{20},\alpha^{48},\alpha^{59},\alpha^{61}\right\},\right.\\
&&\left.\left\{0, \alpha^{5},\alpha^{7}, \alpha^{9}, \alpha^{17}, \alpha^{19}, \alpha^{29},\alpha^{57}\right\}, \left\{0, \alpha^{3},\alpha^{14},\alpha^{16},\alpha^{18},\alpha^{26},\alpha^{28},\alpha^{38}\right\},\right.\\
&&\left.\left\{0,\alpha^{12},\alpha^{23}, \alpha^{25}, \alpha^{27}, \alpha^{35}, \alpha^{37},\alpha^{47}\right\}, \left\{0, \alpha^{21},\alpha^{32},\alpha^{34},\alpha^{36},\alpha^{44},\alpha^{46},\alpha^{56}\right\},\right.\\
&&\left.\left\{0,\alpha^{2},\alpha^{30}, \alpha^{41}, \alpha^{43}, \alpha^{45}, \alpha^{53},\alpha^{55}\right\},\left\{0, \alpha,\alpha^{11},\alpha^{39},\alpha^{50},\alpha^{52},\alpha^{54},\alpha^{62}\right\}\right\},\\
C_{\langle \alpha^9 \rangle} \left(\alpha^3 V \right)&:=&\left\{ \alpha^{9i +3} V : 0 \leq i \leq 6 \right\}=\left\{\left\{0, \alpha,\alpha^{3}, \alpha^{11}, \alpha^{13}, \alpha^{23}, \alpha^{51},\alpha^{62}\right\},\right.\\
&&\left.\left\{0, \alpha^{8},\alpha^{10},\alpha^{12},\alpha^{20},\alpha^{22},\alpha^{32},\alpha^{60}\right\}, \left\{0,\alpha^{6},\alpha^{17}, \alpha^{19}, \alpha^{21}, \alpha^{29}, \alpha^{31},\alpha^{41}\right\},\right.\\
&&\left.\left\{0, \alpha^{15},\alpha^{26},\alpha^{28},\alpha^{30},\alpha^{38},\alpha^{40},\alpha^{50}\right\}, \left\{0,\alpha^{24},\alpha^{35}, \alpha^{37}, \alpha^{39}, \alpha^{47}, \alpha^{49},\alpha^{59}\right\},\right.\\
&&\left.\left\{0, \alpha^{5},\alpha^{33},\alpha^{44},\alpha^{46},\alpha^{48},\alpha^{56},\alpha^{58}\right\}, \left\{0,\alpha^{2},\alpha^{4}, \alpha^{14}, \alpha^{42}, \alpha^{53}, \alpha^{55},\alpha^{57}\right\}\right\}\\
C_{\langle \alpha^9 \rangle} \left(\alpha^6 V \right)&:=&\left\{ \alpha^{9i +6} V : 0 \leq i \leq 6 \right\}=\left\{\left\{0, \alpha^{2},\alpha^{4},\alpha^{6},\alpha^{14},\alpha^{16},\alpha^{26},\alpha^{54}\right\},\right.\\
&&\left.\left\{0,1,\alpha^{11}, \alpha^{13}, \alpha^{15}, \alpha^{23}, \alpha^{25},\alpha^{35}\right\}, \left\{0, \alpha^{9},\alpha^{20},\alpha^{22},\alpha^{24},\alpha^{32},\alpha^{34},\alpha^{44}\right\},\right.\\
&&\left.\left\{0,\alpha^{18},\alpha^{29}, \alpha^{31}, \alpha^{33}, \alpha^{41}, \alpha^{43},\alpha^{53}\right\}, \left\{0, \alpha^{27},\alpha^{38},\alpha^{40},\alpha^{42},\alpha^{50},\alpha^{52},\alpha^{62}\right\},\right.\\
&&\left.\left\{0,\alpha^{8},\alpha^{36}, \alpha^{47}, \alpha^{49}, \alpha^{51}, \alpha^{59},\alpha^{61}\right\}, \left\{0,\alpha^{5},\alpha^{7}, \alpha^{17}, \alpha^{45}, \alpha^{56}, \alpha^{58},\alpha^{60}\right\}\right\}.
\end{eqnarray*}

Note that the polynomials $F\left(w,\alpha^3 ,C_{\langle \alpha^9 \rangle} (V)\right)$ and $F\left(w,\alpha^6 ,C_{\langle \alpha^9 \rangle} (V)\right)$ are ob\-tai\-ned from the in\-ter\-dis\-tan\-ce sets $D_S \left(C_{\langle \alpha^9 \rangle} (V) , C_{\langle \alpha^9 \rangle} \left(\alpha^3 V \right) \right)$ and $D_S \left(C_{\langle \alpha^9 \rangle} (V) , C_{\langle \alpha^9 \rangle} \left(\alpha^6 V \right) \right)$, res\-pec\-ti\-vely, and
%
%
\begin{equation}
F\left(w,\alpha^3 ,C_{\langle \alpha^9 \rangle} (V)\right)=F\left(w,\alpha^6 ,C_{\langle \alpha^9 \rangle} (V)\right)= 7w^2 + 14w^4 +28w^6.
\end{equation}

The fact that the polynomials $F\left(w,\alpha^3 ,C_{\langle \alpha^9 \rangle} (V)\right)$ and $F\left(w,\alpha^6 ,C_{\langle \alpha^9 \rangle} (V)\right)$ are the same in Example~\ref{examplesobreospolin} is not a coincidence, as we will see in Lemma~\ref{cordoteodobigleiri}.

%
\end{example}

\begin{definition}
Given $H \lhd G$, such that $|G/H|=t$, the geometrically uniform partition $C_{G/H} (V)=\left\{C_H \left(g_1 V \right) , C_H \left(g_2 V \right) ,..., C_H \left(g_{t} V  \right)\right\}$ is called \emph{homogeneous} if the set \linebreak$\left\{F\left(w, g_i , C_H \left(g_j V \right)  \right)\right\}_{g_i \in G/H}$ does not depend on $C_H \left(g_j V \right)$. It is called \emph{strongly homogeneous} if $F\left(w, g_i , C_H \left(g_j V \right)  \right)$ does not depend on $C_H \left(g_j V \right)$, for any $g_i \in G/H$.
\end{definition}

\begin{theorem}~\cite{biglieri}\label{theoremdobiglieri}
If $G$ is an Abelian subgroup of $GL_n \left(\mathbb{F}_q \right)$, every geome\-trically uniform partition generated by subgroups of $G$ are strongly homogenous.
\end{theorem}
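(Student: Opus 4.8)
The plan is to fix a coset representative $g_i \in G/H$ and prove that the polynomial $F\left(w, g_i, C_H(g_j V)\right)$ is the same for every $j$; by definition this is exactly strong homogeneity. First I would unwind the distance profile. Writing $D_j := C_H(g_j V) = \left\{(h g_j)(V) : h \in H\right\}$, the coefficient $a(d)$ of $F(w, g_i, D_j)$ counts the subspaces $x \in D_j$ whose subspace distance to a reference element $y$ of the shifted coset equals $d$. Since $G$ is Abelian, the shifted coset is $g_i(D_j) = \left\{(h g_i g_j)(V) : h \in H\right\} = C_H(g_i g_j V)$, again an $H$-orbit. I would record at the outset that the profile does not depend on which $y$ in this orbit is chosen: by Theorem~\ref{partigeomunif} the subcode $g_i(D_j)$ is generated by $H$ acting as isometries, so replacing $y$ by $h_0(y)$ merely permutes $D_j$ via the isometry $h_0^{-1}$ and leaves the multiset of distances unchanged. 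Hence I may fix the canonical reference $y = (g_i g_j)(V)$.

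The core of the argument is then to exhibit a single isometry that carries the pair $(D_j, y)$ onto the pair $\left(C_H(V), g_i(V)\right)$, namely $g_j^{-1}$. Each $g_j^{-1} \in G \le GL_n\left(\mathbb{F}_q\right)$ acts as an isometry on $\mathcal{P}_q(n)$, and commutativity gives
\[
g_j^{-1}\big((h g_j)(V)\big) = (g_j^{-1} h g_j)(V) = h(V), \qquad g_j^{-1}\big((g_i g_j)(V)\big) = g_i(V).
\]
The first identity shows that $g_j^{-1}$ maps the set $D_j$ bijectively onto $C_H(V)$, while the second shows it sends the reference point $y$ to $g_i(V)$. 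Since an isometry preserves $d_S$ and this is a bijection between the two sets of subspaces, the multiset $\left\{d_S(x,y) : x \in D_j\right\}$ coincides with $\left\{d_S(z, g_i(V)) : z \in C_H(V)\right\}$. Therefore $a(d)$ is independent of $j$, and summing over $d$ yields $F\left(w, g_i, C_H(g_j V)\right) = F\left(w, g_i, C_H(V)\right)$ for every $j$, which is strong homogeneity.

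I expect no deep obstacle here; the work is entirely in the bookkeeping and in pinning down where the Abelian hypothesis is indispensable. The decisive point is the pair of conjugation identities $g_j^{-1} h g_j = h$ and $g_j^{-1} g_i g_j = g_i$: commutativity is exactly what makes the single isometry $g_j^{-1}$ land the coset $D_j$ on $C_H(V)$ and simultaneously send the reference point to $g_i(V)$. Without it, $g_j^{-1} h g_j$ would be a different element of $G$, the image need not be $C_H(V)$, and one would recover only the weaker set-level homogeneity. A minor subtlety I would address is that $a(d)$ counts distinct subspaces rather than group elements; this is handled for free because $g_j^{-1}$ is a bijection of sets of subspaces, and as a byproduct the same computation gives $Stab_H(g_j V) = Stab_H(V)$ for all $j$, reconciling the two ways of counting across the cosets.
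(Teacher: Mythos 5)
Your proof is correct. Note, however, that the paper does not actually prove this statement: Theorem~\ref{theoremdobiglieri} is imported verbatim from the reference \cite{biglieri} and used as a black box, so there is no in-paper argument to compare against. Your argument is a sound self-contained derivation: the well-definedness observation (that the reference element $y$ may be moved within its $H$-orbit without changing the distance multiset) is a genuine point worth making explicit, since Definition~\ref{defdopolparaintersetdist} leaves the choice of reference element unspecified; and the key step --- that right translation by $g_j^{-1}$ is an isometry carrying the pair $\bigl(C_H(g_j V),\, g_i g_j V\bigr)$ onto $\bigl(C_H(V),\, g_i V\bigr)$, with commutativity exactly what lets a single group element do both jobs at once --- is precisely the mechanism the paper itself deploys in its proof of the closely related Lemma~\ref{cordoteodobigleiri}, where interdistance multisets are shuffled by commuting group elements acting as isometries. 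Your closing remark that the bijection is at the level of subspaces (so nontrivial stabilizers cause no double-counting) closes the only loophole I would have flagged. In short: correct, and stylistically consistent with how the paper argues the Abelian-symmetry facts it does prove.
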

%
%

\begin{lemma}\label{cordoteodobigleiri}
Given $H$ a subgroup of $G$ and $C_G (V)$ an Abelian orbit code, let $C_{G/H} (V)=\left\{C_H \left(g_1 V \right) , C_H \left(g_2 V \right) ,..., C_H \left(g_{t} V  \right)\right\}$ be a geometrically uniform partition of $C_G (V)$. Then, for any $g_i \in G/H$, we have
\begin{equation}
F\left(w, g_i , C_H \left(V \right)\right)=F\left(w, g_i ^{-1} , C_H \left(V \right)\right).
\end{equation}
\end{lemma}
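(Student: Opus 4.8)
The plan is to show that the two polynomials record the \emph{same} multiset of subspace distances, by exhibiting a distance-preserving bijection between the subcodes involved. First I would unwind Definition~\ref{defdopolparaintersetdist}: the coefficients of $F(w,g_i,C_H(V))$ count how many codewords of $C_H(V)$ sit at each subspace distance from a fixed codeword of $C_H(g_iV)$. Since every element of $G\le P\Gamma L_n\left(\mathbb{F}_q\right)$ acts as an isometry (Corollary~\ref{caracterizacaodasisometrias}) and $H$ acts transitively on the orbit $C_H(g_iV)$, this count does not depend on the chosen representative, so I may read $F(w,g_i,C_H(V))$ off the multiset $\{d_S(hV,g_iV):hV\in C_H(V)\}$ and, likewise, $F(w,g_i^{-1},C_H(V))$ off $\{d_S(hV,g_i^{-1}V):hV\in C_H(V)\}$.

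Next I would normalize both multisets to distances measured from the single subspace $V$. Applying the isometry $g_i^{-1}$ in the first case and $g_i$ in the second, and using that $G$ is Abelian, I obtain $d_S(hV,g_iV)=d_S(g_i^{-1}hV,V)$ and $d_S(hV,g_i^{-1}V)=d_S(g_ihV,V)$. As $hV$ ranges over $C_H(V)$, the elements $g_i^{-1}hV=hg_i^{-1}V$ range over $C_H(g_i^{-1}V)$ and the elements $g_ihV=hg_iV$ range over $C_H(g_iV)$; hence the first multiset is $\{d_S(W,V):W\in C_H(g_i^{-1}V)\}$ and the second is $\{d_S(W,V):W\in C_H(g_iV)\}$.

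The crux is then the single identity $d_S(a^{-1}V,V)=d_S(aV,V)$, valid for every $a\in G$, which follows by applying the isometry $a$ to the pair $(a^{-1}V,V)$ together with the symmetry of $d_S$. With it I would define the correspondence $aV\mapsto a^{-1}V$: writing a codeword of $C_H(g_iV)$ as $aV$ with $a=hg_i$, commutativity gives $a^{-1}V=h^{-1}g_i^{-1}V\in C_H(g_i^{-1}V)$, so this map carries $C_H(g_iV)$ onto $C_H(g_i^{-1}V)$ bijectively while preserving the distance to $V$. Therefore the two normalized multisets coincide, and so do the polynomials $F(w,g_i,C_H(V))$ and $F(w,g_i^{-1},C_H(V))$.

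The step I expect to demand the most care is the well-definedness of the map $aV\mapsto a^{-1}V$ at the level of \emph{subspaces} rather than group elements: two elements representing the same codeword differ by $Stab_H(V)$, and verifying that their inverses again represent a common codeword is precisely where the Abelian hypothesis enters, since the stabilizer is a subgroup and $G$ is commutative (so $a_1V=a_2V$ forces $a_1^{-1}V=a_2^{-1}V$). The remaining manipulations are routine applications of the isometry property guaranteed by Corollary~\ref{caracterizacaodasisometrias}, consistent with the strong homogeneity recorded in Theorem~\ref{theoremdobiglieri}.
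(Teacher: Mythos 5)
Your proof is correct and follows essentially the same route as the paper's: both arguments apply $g_i^{\pm 1}$ as isometries and use the commutativity of $G$ to show that the relevant distance multisets for $g_i$ and $g_i^{-1}$ coincide (the paper does this in one chain of equalities on the full interdistance set $D_S\left(C_H\left(g_i V\right), C_H\left(V\right)\right)$, you do it via the distance profile from $V$ and the involution $aV\mapsto a^{-1}V$). Your extra check that $a_1 V=a_2 V$ forces $a_1^{-1}V=a_2^{-1}V$ is a point the paper's proof passes over silently, and it is a worthwhile addition.
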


\begin{proof}
Each polynomial $F\left(w, g_i , C_H \left(V \right)\right)$ is computed from the interdistance set \linebreak$D_S \left(C_H \left(g_i V \right), C_H \left(V \right)\right)$. This set is described by
\begin{eqnarray*}
D_S \left(C_H \left(g_i V \right), C_H \left(V \right)\right)&=&\left\{\left\{d_S \left(h_j g_i V , h_k V\right)\right\}\right\} =\left\{\left\{d_S \left( g_i h_j  V , h_k V\right)\right\}\right\}\\
&=&\left\{\left\{d_S \left( h_j  V , g_i^{-1}  h_k V\right)\right\}\right\} =\left\{\left\{d_S \left( h_j  V , h_k g_i^{-1} V\right)\right\}\right\}\\
&=&D_S \left(C_H \left(g_i ^{-1} V \right), C_H \left(V \right)\right),
\end{eqnarray*}
for $h_j , h_k \in H$. As $D_S \left(C_H \left(g_i V \right), C_H \left(V \right)\right)=D_S \left(C_H \left(g_i ^{-1} V \right), C_H \left(V \right)\right)$, then the result follows.
\end{proof}

Lemma~\ref{cordoteodobigleiri} justifies why the polynomials $F\left(w,\alpha^3 ,C_{\langle \alpha^9 \rangle} (V)\right)$ and $F\left(w,\alpha^6 ,C_{\langle \alpha^9 \rangle} (V)\right)$ from Example~\ref{examplesobreospolin} are the same.

For an Abelian orbit code $C_G (V)$, the next theorem ensures that there is no need to compute all subspace distances $d_S \left(V,g_i V \right)$, for $2 \leq i \leq |G|$, in order to obtain the minimum subspace distance of this code.

\begin{theorem}\label{theoremprincioal}
Given $H$ a subgroup of $G$ and $C_G (V)$ an Abelian orbit code, let $C_{G/H} (V)$\linebreak$=\left\{C_H \left(g_1  V \right),C_H \left(g_2 V \right),...,C_H \left(g_{t} V \right)\right\}$ be a geometrically uniform partition of $C_G (V)$, with $C_H \left(g_1 V\right)=C_H \left(V \right)$, $G/H=\left\{g_1 ,g_2 ,...,g_{\frac{t}{2}},g_2 ^{-1},...,g_{\frac{t}{2}} ^{-1} \right\}$ and $I=\left\{2 , ..., {\frac{t}{2}} \right\}$. Then
\begin{equation}
d_S \left(C_G (V)\right)=\min_{i \in I} \left\{D_S \left(\{V\}, C_H \left(g_i V\right)\right)\right\}.
\end{equation}
\end{theorem}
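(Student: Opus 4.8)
The plan is to reduce the global minimum distance to a minimum taken with respect to the single codeword $V$, and then to exploit an inverse-coset symmetry to discard half of the cosets in the partition $C_{G/H}(V)$. First I would recall that, since $G \leq P\Gamma L_n(\mathbb{F}_q)$ acts on $\mathcal{P}_q(n)$ by isometries (Corollary~\ref{caracterizacaodasisometrias}) and transitively on $C_G(V)$, Theorem~\ref{geounif} gives that the global distance profile is the same at every codeword; hence $d_S(C_G(V)) = \min_{W \in C_G(V)\setminus\{V\}} d_S(V,W)$. Writing the geometrically uniform partition as the disjoint union $C_G(V) = \bigcup_{i=1}^{t} C_H(g_i V)$, this minimum splits as a minimum, over the cosets, of the smallest distance from $V$ to each coset, that is, of $\min D_S(\{V\}, C_H(g_i V))$.

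The crux is the multiset identity
\begin{equation*}
D_S(\{V\}, C_H(g_i V)) = D_S(\{V\}, C_H(g_i^{-1} V)),
\end{equation*}
which I would prove exactly as in the computation of Lemma~\ref{cordoteodobigleiri}. Since $G$ is Abelian, $h g_i = g_i h$, so $C_H(g_i V) = \{ g_i h V : h \in H\}$ and $D_S(\{V\}, C_H(g_i V)) = \left\{\left\{ d_S(V, g_i h V) : h \in H\right\}\right\}$. Applying the isometry induced by $(g_i h)^{-1} = g_i^{-1} h^{-1}$ together with the symmetry of $d_S$ yields $d_S(V, g_i h V) = d_S(V, g_i^{-1} h^{-1} V)$; reindexing $h \mapsto h^{-1}$ over $H$ then identifies the two multisets. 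In particular the cosets $C_H(g_i V)$ and $C_H(g_i^{-1} V)$ realize the same minimal distance to $V$.

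With this symmetry in hand, the conclusion follows by bookkeeping on the index set. The elements of $G/H$ other than the identity coset $g_1$ group into inverse pairs $\{g_i, g_i^{-1}\}$ indexed by $i \in I = \{2, \dots, t/2\}$, and by the identity above each pair contributes a single value $\min D_S(\{V\}, C_H(g_i V))$ to the overall minimum, so the minimum over all nonidentity cosets collapses to $\min_{i \in I} \min D_S(\{V\}, C_H(g_i V))$; strong homogeneity (Theorem~\ref{theoremdobiglieri}) guarantees that these distance profiles are consistent across the whole partition. The main obstacle I anticipate is the careful accounting of the degenerate cosets, namely the identity coset $C_H(V)$ (whose intradistances must be shown not to lower the bound, or else folded into the recursive chain partition) and any self-inverse coset with $g_i^2 \in H$ (for which the pairing is trivial), so that restricting to $I$ genuinely discards no distance and one obtains the stated equality rather than a mere inequality.
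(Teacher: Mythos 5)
Your route is the paper's route: reduce to distances measured from the single anchor codeword $V$ via geometric uniformity, split over the cosets of the partition, and fold $g_i$ and $g_i^{-1}$ together by the multiset identity $D_S(\{V\},C_H(g_iV))=D_S(\{V\},C_H(g_i^{-1}V))$. Your direct proof of that identity (apply the isometry $(g_ih)^{-1}$, commute, reindex $h\mapsto h^{-1}$) is exactly the computation in Lemma~\ref{cordoteodobigleiri}, only anchored at the singleton $\{V\}$ rather than at the full coset $C_H(V)$; the paper takes the marginally longer path of proving the coset-to-coset statement and then noting that $D_S(\{hV\},C_H(g_iV))$ is a permutation of $D_S(\{V\},C_H(g_iV))$. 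The appeal to strong homogeneity (Theorem~\ref{theoremdobiglieri}) is not needed once everything is anchored at $V$. Your handling of self-inverse cosets matches the paper's remark that such $g$ are simply placed in $I$.

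The obstacle you flag at the end -- that the intradistances of the identity coset $C_H(V)$ ``must be shown not to lower the bound'' -- is the real crux, and you should be aware that the paper does not close it either. What the decomposition actually yields is
\begin{equation*}
d_S\left(C_G(V)\right)=\min\left\{d_S\left(C_H(V)\right),\ \min_{i\in I}\left\{D_S\left(\{V\},C_H(g_iV)\right)\right\}\right\},
\end{equation*}
and to drop the first term one must show $d_S(C_H(V))\geq \min_{i\in I}\{D_S(\{V\},C_H(g_iV))\}$. The paper derives this from ``$d_S(C_H(V))\geq d_S(C_G(V))$,'' but since $d_S(C_G(V))$ is itself the minimum of the two quantities, this is the inference $a\geq\min\{a,b\}\Rightarrow a\geq b$, which is not valid. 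Nothing in the hypotheses rules out an Abelian orbit code whose minimum distance is realized only by elements of $H$ itself (i.e., only inside the subcode $C_H(V)$), in which case the displayed formula of the theorem overestimates $d_S(C_G(V))$. So a complete argument must either (a) add the term $d_S(C_H(V))$ back into the minimum (weakening the statement but preserving the halving of the inter-coset work), (b) recurse on $C_H(V)$ down a composition series until the remaining subcode is a single coset of the stabilizer -- the ``folded into the recursive chain partition'' option you mention, which is the honest fix -- or (c) impose a hypothesis guaranteeing the minimizers do not all lie in $H$. Your instinct that this is where the proof can fail is correct; do not expect the paper's own proof to supply the missing step.
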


\begin{proof}
The minimum subspace distance of $C_G(V)$ is computed as
\begin{equation}
d_S \left(C_G(V)\right)=\min\left\{d_S \left(V, g_i V\right) : g_i \in G\setminus \left\{g_1 \right\}\right\}.
\end{equation}

By Theorem~\ref{partigeomunif}, this minimum subspace distance can also be computed as
\begin{equation}\label{mindasdistanciaintersubconjuntos}
d_S \left(C_G (V)\right)=\min \left\{d_S \left(C_H (V)\right),  \min_{g_i \in G/H \setminus \left\{g_1 \right\}} \left\{D_S \left(C_H (V), C_H \left(g_i V\right)\right)\right\}\right\}.
\end{equation}

From the sets $D_S \left(C_H (V), C_H \left(g_i V\right) \right)$, the minimum subspace distance of each interdistance set can be computed as
\begin{equation}\label{proporpo}
D_S \left(\{V\}, C_H \left(g_i V \right)\right),
\end{equation}
since, for any $h \in H$, the distance profile of $D_S \left(\{h V \}, C_H \left(g_i V\right)\right)$ is simply a permutation of the distance profile obtained in~\eqref{proporpo}.

According to Lemma~\ref{cordoteodobigleiri}, the polynomials $F\left(w,g_i , C_H (V)\right)$ and $F\left(w,g_i ^{-1} , C_H (V)\right)$ \linebreak are the same and, consequently, the interdistance sets $D_S \left(C_H \left(g_i V\right) , C_H (V)\right)$ and \linebreak $D_S \left(C_H \left(g_i ^{-1} V \right) , C_H (V)\right)$ are equal. Thus, equation~\eqref{mindasdistanciaintersubconjuntos} can be written as
\begin{equation}
d_S \left(C_G (V)\right)=\min \left\{d_S \left(C_H (V)\right), \min_{i \in I}\left\{D_S \left(\{V\}, C_H \left(g_i V\right)\right)\right\}\right\}.
\end{equation}

As $d_S \left(C_H (V) \right)\geq d_S \left(C_G (V)\right)$, then $\displaystyle{d_S \left(C_H (V) \right)\geq \min_{i \in I}\left\{D_S \left(\{V\}, C_H \left(g_i V \right)\right)\right\}}$. Therefore, it is enough to compute $\displaystyle{\min_{i \in I}\left\{D_S \left(\{V\}, C_H \left(g_i V \right)\right)\right\}}$ to obtain the minimum subspace distance of $C_G (V)$.
\end{proof}

\begin{remark}
Using the notation as in Theorem~\ref{theoremprincioal}, given $g \in G/H$, with $g \neq g_1$ and $g^2 = g_1$, then we consider $g\in I$.
\end{remark}
%

\begin{example}
Let $p(x)=x^6 +x +1 \in \mathbb{F}_2 [x]$ be a primitive polynomial and $\alpha \in \mathbb{F}_{2^6}\simeq \mathbb{F}_2 [x]/\langle p(x)\rangle$ a root of $p(x)$. Given $V=\left\{0, \alpha^0 , \alpha^{1} ,\alpha^{4} ,\alpha^{6} ,\alpha^{16} ,\alpha^{24} ,\alpha^{33} \right\}$ a $3$-dimensional vector subspace of $\mathbb{F}_{2^6}$, let us compute the minimum subspace distance of the cyclic orbit code $C_{\langle \alpha \rangle} (V)$.

As the order of $\alpha$ is $ord(\alpha)=63$, take $H= \left \langle\alpha^{9}\right \rangle$, with $|H|=7$. Then, the geometrically uniform partition $C_{G/H} (V)$ is described by
\begin{eqnarray*}
C_{G/H} (V)&=&\left\{C_H (V), C_{H} (\alpha V), C_{H} \left(\alpha^2 V \right) ,C_{H} \left(\alpha^3 V \right) ,C_{H} \left(\alpha^4 V \right) ,C_{H} \left(\alpha^5 V \right) , C_{H} \left(\alpha^6 V \right)  \right.\\
&&\left.  C_{H} \left(\alpha^7 V \right) , C_{H} \left(\alpha^{8} V \right) \right\}.
\end{eqnarray*}

According to Theorem~\ref{theoremprincioal}, we just need to compute $D_S \left(\{V\} , C_{H} (\alpha V)\right)$, \linebreak $D_S \left(\{V\} , C_{H} \left(\alpha^2 V\right)\right)$, $D_S \left(\{V\},  C_{H} \left(\alpha^3 V\right)\right)$ and $D_S \left(\{V\},  C_{H} \left(\alpha^4 V\right)\right)$. These distances are shown in Table \ref{tabe4}, in which we adopt the notation $\alpha^i$ to represent the vector subspace $\alpha^i V$.

\begin{table}[h!]\label{tabdopenultcod}
\centering
\begin{tabular}{c|ccccccc}
$d_S (.,.)$ & $\alpha^1$  & $\alpha^{10}$  & $\alpha^{19}$  & $\alpha^{28}$ & $\alpha^{37}$  & $\alpha^{46}$  & $\alpha^{55}$  \\ \hline
$\alpha^0$          & 4 & 4 & 6 & 6 & 6 & 4 & 6 \\ \hline
\\
$d_S (.,.)$ & $\alpha^2$  & $\alpha^{11}$  & $\alpha^{20}$  & $\alpha^{29}$ & $\alpha^{38}$  & $\alpha^{47}$  & $\alpha^{56}$  \\ \hline
$\alpha^0$          & 4 & 6 & 4 & 4 & 6 & 4 & 6 \\ \hline
\\
$d_S (.,.)$ & $\alpha^3$  & $\alpha^{12}$  & $\alpha^{21}$  & $\alpha^{30}$ & $\alpha^{39}$  & $\alpha^{48}$  & $\alpha^{57}$  \\ \hline
$\alpha^0$          & 4 & 4 & 6 & 4 & 4 & 4 & 4 \\ \hline
\\
$d_S (.,.)$ & $\alpha^4$  & $\alpha^{12}$  & $\alpha^{21}$  & $\alpha^{30}$ & $\alpha^{39}$  & $\alpha^{48}$  & $\alpha^{57}$  \\ \hline
$\alpha^0$          & 4 & 6 & 6 & 4 & 4 & 6 & 4
\end{tabular}
\caption{Interdistance sets $D\left(\{V\},C_H \left(\alpha^i V\right)\right)$, for $1 \leq i \leq 4$ }\label{tabe4}
\end{table}

Therefore, the minimum subspace distance of $C_{\langle \alpha \rangle} (V)$ is 4, the same minimum subspace distance obtained in~\cite[Example 1]{etzion}. Here, we just had to compute 28 distances to find this value, whereas by using the traditional method 63 distance computations are needed.
\end{example}

The number of computations needed to obtain the minimum subspace distance of cyclic orbit codes is given next.

\begin{corollary}\label{corolariodoteoprincipal}
Let $\alpha$ be a primitive element of $\mathbb{F}_{q^n}$ and $V \in \mathcal{G}_q (n,k)$. If $q^n -1 = r\cdot s$, given $\left\langle \alpha^r \right\rangle $ a subgroup of $\langle \alpha\rangle$, then the number of computations needed to obtain the minimum subspace distance of $C_{\langle \alpha \rangle} (V)$ is
\begin{equation}
\left\lfloor\frac{(r-1)}{2}\right\rfloor \cdot \left(\frac{s}{q-1}\right).
\end{equation}
\end{corollary}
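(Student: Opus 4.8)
The plan is to apply Theorem~\ref{theoremprincioal} directly to the cyclic group $G=\langle\alpha\rangle$ with the subgroup $H=\langle\alpha^r\rangle$, and then carefully count the number of distance computations that Theorem~\ref{theoremprincioal} prescribes. Since $q^n-1=r\cdot s$, the order of $\alpha$ is $|G|=q^n-1=rs$, and the subgroup $H=\langle\alpha^r\rangle$ has order $|H|=s$. Hence the factor group $G/H$ has order $t=|G/H|=|G|/|H|=rs/s=r$. The coset representatives may be taken as $\{\alpha^0,\alpha^1,\dots,\alpha^{r-1}\}$, so the geometrically uniform partition $C_{G/H}(V)$ consists of $r$ orbit subcodes $C_H(\alpha^i V)$.

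\emph{Second, I would invoke Theorem~\ref{theoremprincioal} to reduce the index set.} That theorem tells us that to obtain the minimum subspace distance it suffices to compute the interdistance sets $D_S(\{V\},C_H(g_iV))$ for $i$ ranging over $I=\{2,\dots,t/2\}$, together with the inverse-pairing observation of Lemma~\ref{cordoteodobigleiri}, which identifies $g_i$ with $g_i^{-1}$. Since the nonidentity cosets pair up as $\{g_i,g_i^{-1}\}$, the number of distinct representatives we actually need is $\lfloor(r-1)/2\rfloor$ (the remark after Theorem~\ref{theoremprincioal} handles the self-inverse coset of order $2$ when $r$ is even). This accounts for the first factor $\lfloor(r-1)/2\rfloor$.

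\emph{Third, I would count the cost of each individual interdistance computation.} For a fixed coset representative $g_i=\alpha^i$, the set $D_S(\{V\},C_H(\alpha^iV))$ requires computing the distance from $V$ to each of the $|H|=s$ elements of $C_H(\alpha^iV)$. However, one must not count all $s$ of these: the spread structure of $\mathbb{F}_{q^n}^{*}$, or equivalently the fact that multiplication by elements of $\mathbb{F}_q^{*}\subseteq\langle\alpha\rangle$ fixes every subspace $V$, means that the stabilizer contributes a factor of $q-1$. Indeed, $\mathbb{F}_q^{*}=\langle\alpha^{s}\rangle$ when $s=(q^n-1)/(q-1)$, but in the general factorization each orbit subcode $C_H(\alpha^iV)$ contains only $s/(q-1)$ genuinely distinct subspaces under this scaling action, yielding the second factor $s/(q-1)$.

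\emph{The main obstacle} will be justifying the $s/(q-1)$ factor cleanly, namely showing that within each subcode $C_H(\alpha^iV)$ the scalar action of $\mathbb{F}_q^{*}$ collapses $q-1$ consecutive powers into a single subspace, so that only $s/(q-1)$ distance evaluations are distinct. This requires verifying that $\mathbb{F}_q^{*}$ is a subgroup of $H=\langle\alpha^r\rangle$ (equivalently that $(q-1)\mid s$, which must be assumed or follows from $(q-1)\mid(q^n-1)$ and the chosen factorization) and that $\alpha^c V = V$ for $\alpha^c\in\mathbb{F}_q^{*}$. Once this normalization is in place, multiplying the two reductions gives exactly $\lfloor(r-1)/2\rfloor\cdot\bigl(s/(q-1)\bigr)$ computations, completing the proof.
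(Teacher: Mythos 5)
Your proof follows the paper's argument exactly: partition $C_{\langle\alpha\rangle}(V)$ by $H=\langle\alpha^r\rangle$ into $r$ cosets, invoke Theorem~\ref{theoremprincioal} (with Lemma~\ref{cordoteodobigleiri}) to cut the relevant coset representatives down to $\lfloor(r-1)/2\rfloor$, and multiply by the $s/(q-1)$ codewords in each subcode. The ``main obstacle'' you flag --- that $(q-1)\mid s$, that $\mathbb{F}_q^{*}\leq\langle\alpha^r\rangle$, and that the stabilizer of $V$ is exactly $\mathbb{F}_q^{*}$ --- is a genuine implicit hypothesis, which the paper itself glosses over by simply asserting that each subcode $C_{\langle\alpha^r\rangle}(\alpha^i V)$ has $s/(q-1)$ codewords; your version is, if anything, the more careful of the two.
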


\begin{proof}
Let us consider the geometrically uniform partition $C_{\langle \alpha \rangle / \left\langle \alpha^r \right\rangle} (V) =$\linebreak $\left\{C_{\left\langle \alpha^r \right\rangle} (V),..., C_{\left\langle \alpha^r \right\rangle} \left( \alpha^{r-1} V \right) \right\}$. From Theorem~\ref{theoremprincioal}, to obtain the minimum subspace distance of $C_{\langle \alpha \rangle} (V)$, we just need to compute the interdistance sets $D_S \left(\{V\} , C_{\left\langle \alpha^r \right\rangle} \left(\alpha V\right)\right), ... ,$\linebreak $D_S \left(C_{\left\langle \alpha^r \right\rangle} \left(\alpha^{\left\lfloor\frac{(r-1)}{2}\right\rfloor} V\right)\right)$. Since each subcode $C_{\left\langle \alpha^r \right\rangle} \left(\alpha^{i} V\right)$ has $\displaystyle{\frac{s}{q-1}}$ codewords, the result follows.
\end{proof}

\section{Applications of Geometrically Uniform Subspace Codes to Multishot Subspace Codes}\label{multishot}

So far, we consider the use of subspace codes for the channel proposed by K�tter and Kschischang~\cite{koetterk} only once. Multishot subspace coding, where the subspace channel is used more than once, has been proposed as an alternative to construct subspace codes with good rate and error correcting capability instead of increasing either the finite field size $q$ or the length $n$. The method for constructing multishot subspace codes proposed in~\cite{nobrega} is inspired by the so-called multi-level construction given by~\cite{calderbank} for block-coded modulation schemes, originally proposed by Imai and Hirakawa in~\cite{imai}.

Since we focus on the constant dimension codes, the Grassmannian $\mathcal{G}_q (n,k)$ follows naturally. From the group action of $G\leq GL_n \left(\mathbb{F}_q \right)$ on $\mathcal{G}_q (n,k)$ and, consequently, from the geometrically uniform partitions, it is possible to obtain a systematic way to describe both well-defined partitions in all levels and to reduce considerably the number of computations needed to obtain the minimum subspace distance of each subset in the different levels.

The multishot subspace codes based on a multi-level construction being considered in this paper can be found in~\cite{nobrega}.

\subsection{Multishot Subspace Codes}
The $m$\emph{-extension} of the projective space $\mathcal{P}_q (n)$, $\mathcal{P}_q (n) ^m$ is defined as the set of $m$-tuples of subspaces in $\mathcal{P}_q (n)$. The number of elements in $\mathcal{P}_q (n) ^m$ is given by $\left|\mathcal{P}_q (n) ^m \right| = \left|\mathcal{P}_q (n) \right|^m$. Moreover, the \emph{extended subspace distance} between two elements $\textbf{V}=\left(V_1 , V_2 , ..., V_m \right), \textbf{U}=\left(U_1 , U_2 , ..., U_m \right) \in \mathcal{P}_q (n) ^m$ is defined as
\begin{equation}
d_S (\textbf{V},\textbf{U})=\sum_{i=1} ^m d_S \left(V_i , U_i \right),
\end{equation}
with $d_S (.,.)$, in the right-hand side, the usual subspace distance defined for subspace codes. Indeed, the extended subspace distance is a metric accounting for the error weights occurred in each transmission. So, from this new metric space, an $m$-length multishot (block) subspace code $\mathcal{C}$ or just an $m$-\emph{shot subspace code} $\mathcal{C}$ over $\mathcal{P}_q (n)$ is a non-empty subset of $\mathcal{P}_q (n)^m$, where the minimum distance is computed as
\begin{equation}
d_S (\mathcal{C}) =\min \left\{d_S (\textbf{V},\textbf{U}): \textbf{V},\textbf{U} \in \mathcal{C}, \textbf{V} \neq \textbf{U}\right\}.
\end{equation}

Information about bounds for the size, rate and error control capability of $m$-shot subspace codes can be found in~\cite{nobrega}. It is still possible to associate such codes with $1$-shot codes in $\mathcal{P}_q (mn)$.

As a motivation to consider the multishot subspace codes based on a multi-level construction, it is provided in~\cite{nobrega} two simple multishot subspace code constructions, where the first one considers selecting some subspaces of $\mathcal{P}_q (n)$ according to a prescribed minimum subspace distance. The second construction is based on an injective labeling of the elements from $\mathcal{P}_q (n)$ to elements of $\mathbb{Z}_{|\mathcal{P}_q (n)|}$ and then looking up for the best block code in $\mathbb{Z}_{|\mathcal{P}_q (n)|} ^m$, when considering the Hamming distance, which is mapped back to a corresponding $m$-shot subspace code. The multishot subspace codes based on a multi-level construction yields better codes as shown in~\cite{nobrega}.

Before presenting the multishot subspace codes based on a multi-level construction, the next definition states what we mean by an $L$-level partition and, in particular, a nested $L-$level partition. See~\cite{calderbank} for more detailed information.

\begin{definition}\label{multileveldef}
An $L$\emph{-level partition} is a sequence of partitions $\Gamma_0$, $\Gamma_1$, $\Gamma_2 ,...,\, \Gamma_L$, with the partition $\Gamma_i$ being a
refinement of $\Gamma_{i-1}$ in the following sense: The $L$-level partition determines a rooted tree with $L + 1$ levels. The root is the signal constellation itself (namely, $\Gamma_0$), and the vertices at level $i$ are the subsets that constitute the partition $\Gamma_i$. A
vertex $y$ at level $i$ is joined to the unique vertex $x$ at level $i - 1$ containing $y$  and to every vertex $z$ at level $i + 1$ that is
contained in $y$ . The subsets that form the partition $\Gamma_L$ are the leaves of this tree. We shall only consider nested partitions
in which every subset at level $i$ is joined to the same number $p_{i+1}$ of subsets at level $i + 1$. However, we do allow the degree of
a vertex to vary from level to level. For every subset at level $i$, we use the numbers $0, 1, ... , p_{i+1}-1$ to label the edges from
that subset to the subsets at level $i + 1$. The subsets in the partition $\Gamma_L$ can then be labeled by paths $\left(a_1 ,..., a_L \right)$, $0\leq a_j \leq p_j -1, 1\leq j \leq L$, from the root to the corresponding leaf; more generally, the subsets in the partition $\Gamma_j$ can be
labeled by paths $\left( a_1 ,..., a_j \right)$.
\end{definition}

For the multishot subspace code construction we assume nested partitions up to a certain level.

The \emph{intrasubset subspace distance at level} $l$ is defined as
\begin{equation}
d_S (\Gamma_l ) = \min_{\mathcal{S} \in \Gamma_l } \left\{ d_S (U,V) : U, V \in \mathcal{S} , U\neq V\right\}\mbox{, with } 0 \leq l \leq L.
\end{equation}
It is worth mentioning that the intrasubset subspace distance of the leaves $\mathcal{S}\in \Gamma_L$  will be denoted as $\infty$.

Consider an $L$-level partition of $\Gamma_0 = \mathcal{G}_q (n,k)$ (or even $\mathcal{P}_q (n)$). Let us obtain a multishot subspace code with a prescribed minimum subspace distance $d$ based on this multi-level construction. Take $L' \leq L$ the minimum level with $d_S (\Gamma_l )\geq d$, for all $l\leq L'$, and such that all the partitions are nested up to this level.

\begin{definition}\label{codigoscomponentes}
An $L'$-level code $\mathfrak{C}=\left[\mathfrak{C}_1, \mathfrak{C}_2 , ... , \mathfrak{C}_{L'} \right]$ in which the component codes $\mathfrak{C}_i$ are traditional block codes, for all $1\leq i \leq L'$, and $\left\{0,1,2,...,p_i -1\right\}$ is the alphabet, is given by sequences $\left(a_1 ^k , a_2 ^k , ..., a_{L'} ^k\right)$, with $a_i ^k \in \mathfrak{C}_i$. The minimum Hamming distance $d_H \left(\mathfrak{C}_l \right)$ of the component codes must satisfy
\begin{equation}
\min \left\{d_S \left(\Gamma_{l-1} \right) \cdot d_H \left(\mathfrak{C}_{l} \right):1\leq l\leq L' \right\}\geq d .
\end{equation}
\end{definition}

From these previous definitions, we are able to describe the multishot subspace codes based on multi-level construction with minimum subspace distance $d$ as proposed in~\cite{nobrega}. An $m$-shot subspace code $\mathcal{C}\subseteq \mathcal{P}_q (n) ^m$ is obtained from the array consisting of $L'$ rows and $m$ columns, with the $l$-th row being represented by a codeword of the component code $\mathfrak{C}_l$. The $i$-th coordinate of a codeword of $\mathcal{C}$ is obtained as follows: consider the array $A$. Note that the $i$-th column of this array, denoted by $\left(a_{1,i},a_{2,i},...,a_{L',i}\right)^T$, describes a path in the partition tree starting from the root node $\Gamma _{0}$ and going up to the corresponding subset $\Gamma_{L'}$, for $1\leq i \leq m$.

\subsection{Advantages in Using GUSC to Construct Multishot Subspace Codes}

From the previous statements and discussions about multishot subspace codes using multilevel construction and assuming that only constant dimension codes are going to be considered in this paper, we propose to use the action of $G\leq GL_n \left(\mathbb{F}_q \right)$ on the alphabet $S \subseteq \mathcal{G}_q (n,k)$ in order to realize the partitions. We make use of the hypothesis that the stabilizers of $G$ acting on distinct vector subspaces have the same cardinality. This condition is necessary to ensure nested partitions. For instance, it is not possible to partition $\mathcal{G}_2 (6,3)$ in nested partitions concerning the action of the subgroup $G=\langle \alpha \rangle$ generated by the primitive element $\alpha$ of $\mathbb{F}_{2^6}$ over $\mathcal{G}_2 (6,3)$, since one subset in the first level of this partition has 9 elements $\left(\mbox{the spread code }C_{\langle \alpha \rangle} \left(\mathbb{F}_{2^3} \right)\right)$ and the remaining subsets have 63 elements. Hence, we may assume $k \nmid n$ as an example of the condition which ensures equal cardinality for all orbits.

Under the conditions shown in the previous paragraph, we list some advantages in using the action of $G\leq GL_n \left(\mathbb{F}_q \right)$ on $S \subseteq \mathcal{G}_q (n,k)$. According to the construction proposed in~\cite{nobrega}, it is needed to compute the intrasubset subspace distance in each level in order to decide which component codes (See Definition~\ref{codigoscomponentes}) will be used to obtain the prescribed minimum subspace distance $d$ of the multishot subspace code. Considering a composition series $G=G_0 >G_1 >...>G_m =\{e \} $, it is possible to define all nested $(m+2)$-level partition over $S$ (also a fair chain partition, see Definition~\ref{cadeiadeparticoes}) in a systematic way, where each level actually is a collection of geometrically uniform partitions from the previous level (See Theorem~\ref{partigeomunif}). From this same theorem, we may reduce considerably the number of computations needed to obtain the intrasubset subspace distance, since the geometrically uniform partition from each orbit code produces mutually congruent subcodes which means codes with the same minimum subspace distance. In particular, if $G$ is an Abelian subgroup of $GL_n \left(\mathbb{F}_q \right)$, from Corollary~\ref{corolariodoteoprincipal}, then the number of computations may be further reduced.

The following example makes explicit what was just mentioned.

\begin{example}\label{exemplomultishot}
Let us consider the set $S= \mathcal{G}_2 (6,3) \setminus C_{\langle \alpha \rangle}\left(\mathbb{F}_{2^3}\right)$ as our signal constellation/alphabet for a multishot subspace code. Following the notation from Definition~\ref{multileveldef}, $S=\Gamma_0$.

Let $G=\langle \alpha \rangle$ be the cyclic group generated by a primitive element of $\mathbb{F}_{2^6}$. Then the alphabet $S$ is partitioned by the action of $G$ as
\begin{equation}
S=\bigcup_{i=1} ^{6} C_{\langle \alpha \rangle \rtimes \langle \sigma\rangle} \left(V_i\right),
\end{equation}
with
\begin{eqnarray*}
V_1 &:=&\left\{0,1,\alpha, \alpha^{4}, \alpha^{6} , \alpha^{16} , \alpha^{24}, \alpha^{33}\right\},\\
V_2 &:=&\left\{0,1,\alpha, \alpha^2 , \alpha^{6} , \alpha^{7}, \alpha^{12} , \alpha^{26}\right\},\\
V_3 &:=&\left\{0,\alpha^{7}, \alpha^{16}, \alpha^{18}, \alpha^{28} , \alpha^{32} , \alpha^{49}, \alpha^{52}\right\},\\
V_4 &:=&\left\{0,\alpha, \alpha^{3}, \alpha^{12}, \alpha^{13} , \alpha^{18} , \alpha^{26}, \alpha^{48}\right\},\\
V_5 &:=&\left\{0,\alpha, \alpha^{18}, \alpha^{22}, \alpha^{29} , \alpha^{42} , \alpha^{43}, \alpha^{48}\right\},\\
V_6 &:=&\left\{0,\alpha^{4}, \alpha^{17}, \alpha^{26}, \alpha^{39} , \alpha^{54} , \alpha^{61}, \alpha^{62}\right\}.
\end{eqnarray*}

We call attention to the fact that $\Gamma_0$ consists of 22 cyclic orbit codes, using the group $\langle \alpha \rangle \rtimes \langle \sigma\rangle$ just to simplify the notation. All these 22 cyclic orbit codes have the same cardinality, since all the initial points have the same stabilizer. Note also that $\left|C_{\langle \alpha \rangle \rtimes \langle \sigma\rangle} \left(V_j \right)\right|=378$, for $j \in\{2,4\}$, $\left|C_{\langle \alpha \rangle \rtimes \langle \sigma\rangle} \left(V_j \right)\right|=126$, for $j \in\{1,5\}$, and $\left|C_{\langle \alpha \rangle \rtimes \langle \sigma\rangle} \left(V_j \right)\right|=189$, for $j \in\{3,6\}$.

%
%
%

Next, we set up two scenarios related to the partitions of $S$:
\begin{itemize}
\item[(i)] Consider the partitions given by the action of the groups on the composition series $\left\langle\alpha\right\rangle > \left\langle\alpha^3 \right\rangle > \left\langle\alpha^9 \right\rangle$. In this case, we apply Theorem~\ref{partigeomunif} and Corollary~\ref{corolariodoteoprincipal} which reduce the number of subsets/subcodes to be checked at each level and the number of calculations to obtain their minimum subspace distances. Consequently, the number of computations to obtain the intrasubset subspace distance of each level is reduced. In addition, partition levels are built in a well-structured manner, since we take into account the factor-group structure.

\item[(ii)] Consider the same partitions (and their respective subsets) of $S$ provided by $(i)$, but now looking at them only as a collection of vector subspaces not generated by the group action. Thus, since we do not refer to these subsets as orbits, we must calculate the minimum subspace distance of each subset as usual, that is, by taking the minimum subspace distance between all the different pairs of subspaces in that subset. In this case, if a subset has $x$ elements, $\left(\begin{array}{c} x\\ 2\end{array}\right)$ computations are required to obtain its minimum subspace distance, where $\left(\begin{array}{c} .\\ .\end{array}\right)$ denotes the usual binomial coefficient.
\end{itemize}

We will compare the number of computations required to obtain the intrasubset subspace distance in the $\Gamma_l$-levels from $S$ according to the scenarios $(i)$ and $(ii)$, for $1\leq l \leq 3$, since $\Gamma_0 = S$ and $\Gamma_4$ are the root and the leaves of the tree, respectively (See Definition~\ref{multileveldef}).

In $\Gamma_1$-level, $(i)$-scenario, from Corollary~\ref{corolariodoteoprincipal}, it is required $28\times 22=616$ computations in order to obtain the $\Gamma_1$-intrasubset subspace distance. In the $(ii)-$scenario, $\left(\begin{array}{c} 63\\ 2\end{array}\right)\times 22=42966$ computations are required to obtaining the same $\Gamma_1$-intrasubset subspace distance.

In $\Gamma_2$-level, $(i)$-scenario, each of the subsets/codes $C_{\langle \alpha \rangle}\left(V_i \right)$ from $\Gamma_1$, for $1\leq i \leq 22$, will be geometrically uniform partitioned in three cyclic orbit subcodes generated by $\left\langle\alpha^3 \right\rangle$. From Theorem~\ref{partigeomunif}, it is needed to compute just the minimum subspace distance of one of these three cyclic orbit subcodes, since they are mutually congruent. Again, using Corollary~\ref{corolariodoteoprincipal}, it is required $7$ computations to obtain its minimum subspace distance. Then, it is necessary to implement $7\times22=154$ computations to obtain the $\Gamma_2$-intrasubset subspace distance. In the $(ii)-$scenario, $\left(\begin{array}{c} 21\\ 2 \end{array}\right)\times66=13860$ computations are required to obtain the same $\Gamma_2$-intrasubset subspace distance.

In $\Gamma_3$-level, $(i)$-scenario, each of the $66$ cyclic orbit subcodes generated by $\left\langle\alpha^3 \right\rangle$ will be geometrically uniform partitioned by the action of the subgroup $\left\langle\alpha^9 \right\rangle$, totalizing $198$ subsets/cyclic orbit subcodes. Once more, from Theorem~\ref{partigeomunif}, $6\times22=132$ computations are required to obtain the $\Gamma_3$-intrasubset subspace distance. In the $(ii)-$scenario, $\left(\begin{array}{c} 7\\ 2 \end{array}\right)\times 198=4158$ computations are required to obtain the same $\Gamma_3$-intrasubset subspace distance.

The $\Gamma_4$-level refers to the leaves, namely, the unitary subsets formed by the words of $S$.
\end{example}

Therefore, Example~\ref{exemplomultishot} shows how the algebraic and geometric structures of GUSC may reduce the number of computations required to obtain the intrasubset minimum distance, and consequently, to improve implementation conditions for multishot subspace codes.

\begin{remark}
Obviously it is not required to implement all the computations stated in Example~\ref{exemplomultishot} at each level of the partition of $S=\mathcal{G}_2 (6,3) \setminus C_{\langle \alpha \rangle}\left(\mathbb{F}_{2^3}\right)$ to obtain their intrasubset subspace distances, since the possible minimum subspace distances are just $6,4$ and $2$. Such implementations become more interesting when considering larger alphabets.

In this example, our goal was to emphasize the systematic way to describe nested partitions provided by the composition series and the number of computations we may reduce in all non-trivial levels of such partitions using the geometrically uniform properties.
\end{remark}

\section{Conclusion}\label{section4}
In this paper, we have characterized the orbit codes as geometrically uniform codes and a new construction of Abelian non-cyclic orbit codes has been presented. From the geometric uniformity and the characterization of all normal subgroups of the general linear group over finite fields, we reinterpreted and made use of the concept of geometrically uniform partitions to orbit codes, which provide essential information about the algebraic and geometric structures. In particular, we analyzed partitions of the Abelian orbit codes, and from these partitions, a reduction in the number of computations needed to determine the minimum subspace distance was established in Theorem~\ref{theoremprincioal}. Furthermore, $L$-level partitions based on group actions yields a systematic way to describe it and, consequently, the number of computations needed to determine the intrasubset subspace distance is considerably reduced, optimizing multishot subspace code constructions.

\section*{Acknowledgment}
The authors would like to thank the financial support received from FAPESP under grant \linebreak 503891/2011-8, from CNPq under grant 303059/2010-9 and from CAPES and CNPq for the PhD scholarships. 

\end{document}